\newtheorem{lemma}{Lemma}
\newtheorem{cor}{Corollary}
\newtheorem{theorem}{Theorem}
\newtheorem{obs}{Observation}
\newtheorem{conjecture}{Conjecture}
\newcommand{\fig}[1]{\figurename~\ref{#1}}
\g@addto@macro\bfseries{\boldmath}
\def\identity{{\sf id}}
\author[1]{Stefan Felsner\thanks{Partially supported by DFG grant FE
    340/11--1.}}
\author[2]{Alexander Pilz\thanks{Supported by a Schr\"odinger
    fellowship of the Austrian Science Fund (FWF): J-3847-N35.}}
\author[3]{Patrick Schnider}
\affil[1]{Institut f\"ur Mathematik, Technische Universit\"at Berlin\\
  \texttt{felsner@math.tu-berlin.de}}
\affil[2]{Institute of SoftwareTechnology, Graz University of Technology, Austria\\
  \texttt{apilz@ist.tugraz.at}}
\affil[3]{Department of Computer Science, ETH Z\"{u}rich, Switzerland\\
  \texttt{patrick.schnider@inf.ethz.ch}}
\newif\ifap
\title{Arrangements of Approaching Pseudo-Lines}
\begin{document}
\maketitle

\begin{abstract}
We consider arrangements of $n$ pseudo-lines in the Euclidean plane
where each pseudo-line $\ell_i$ is represented by a bi-infinite
connected $x$-monotone curve $f_i(x)$, $x \in \mathbb{R}$, s.t.\ for
any two pseudo-lines $\ell_i$ and $\ell_j$ with $i < j$, the
function $x \mapsto f_j(x) - f_i(x)$ is monotonically decreasing and
surjective (i.e., the pseudo-lines approach each other until they
cross, and then move away from each other).  We show that such
\emph{arrangements of approaching pseudo-lines}, under some aspects,
behave similar to arrangements of lines, while for other aspects,
they share the freedom of general pseudo-line arrangements.
For the former, we prove:
\begin{itemize}
\item There are arrangements of pseudo-lines that are not realizable
  with approaching pseudo-lines.
\item Every arrangement of approaching pseudo-lines has a dual
  generalized configuration of points with an underlying arrangement
  of approaching pseudo-lines.
\end{itemize}
For the latter, we show:
\begin{itemize}
\item There are $2^{\Theta(n^2)}$ isomorphism classes of arrangements
  of approaching pseudo-lines (while there are only
  $2^{\Theta(n \log n)}$ isomorphism classes of line arrangements).
\item It can be decided in polynomial time whether an allowable
  sequence is realizable by an arrangement of approaching
  pseudo-lines.
\end{itemize}
Furthermore, arrangements of approaching pseudo-lines can be
transformed into each other by flipping triangular cells, i.e., they
have a connected flip graph, and every bichromatic arrangement of this
type contains a bichromatic triangular cell.
\end{abstract}

\section{Introduction}
Arrangements of lines and, in general, arrangements of hyperplanes are
paramount data structures in computational geometry whose
combinatorial properties have been extensively studied, partially
motivated by the point-hyperplane duality.  
Pseudo-line arrangements are a combinatorial generalization of line arrangements.
Defined by Levi in 1926 the full potential of working with these structures was first
exploited by Goodman and Pollack.

While pseudo-lines can be considered either as combinatorial or
geometric objects, they also lack certain geometric properties that may
be needed in proofs. The following example motivated
the research presented in this paper.

Consider a finite set of lines that are either red or blue, no two of
them parallel and no three of them passing through the same point.
Every such arrangement has a bichromatic triangle, i.e., an
empty triangular cell bounded by red and blue lines.  This can be
shown using a distance argument similar to Kelly's proof of the
Sylvester-Gallai theorem (see, e.g.,~\cite[p.~73]{proofs_book}). We
sketch another nice proof.
Think of the arrangement as a union of two monochromatic arrangements in
colors blue and red. Continuously translate the red arrangement in
positive $y$-direction while keeping the blue arrangement in place.
Eventually the combinatorics of the union arrangement will change with
a triangle flip, i.e., with a crossing passing a line.  The area of
monochromatic triangles is not affected by the motion. Therefore, the
first triangle that flips is a bichromatic triangle in the original
arrangement.  See \figurename~\ref{fig_proof_triangle}~(left).

\begin{figure}
\centering
\includegraphics{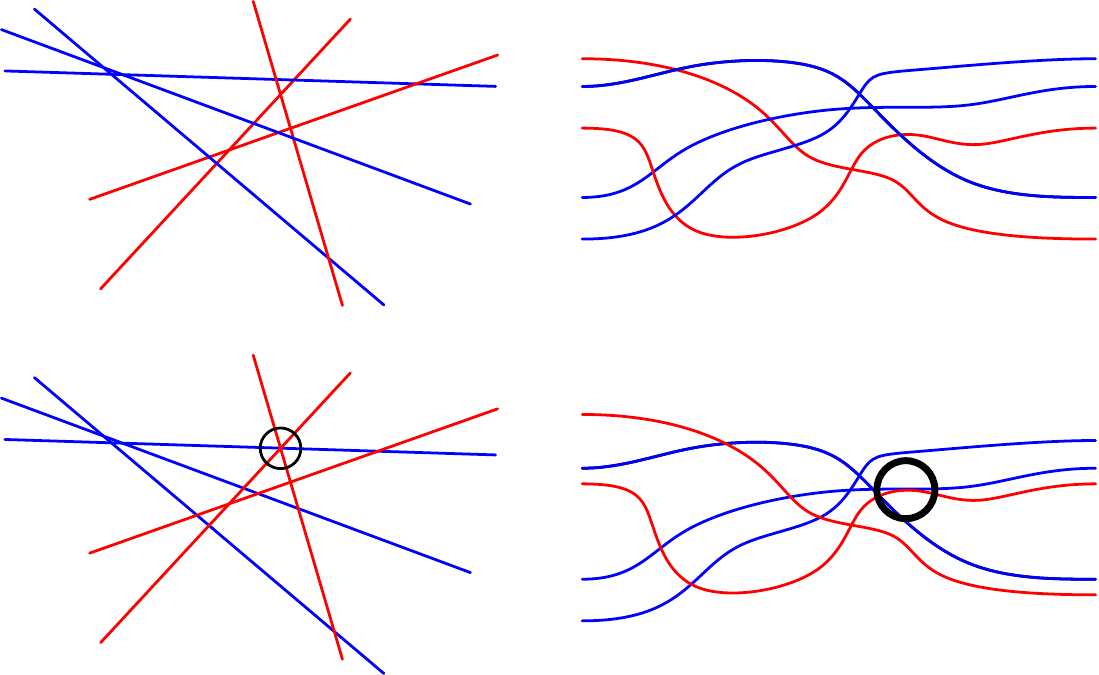}
\caption{Vertical translation of the red lines shows that there is
  always a bichromatic triangle in a bichromatic line arrangement
  (left).  For pseudo-line arrangements, a vertical translation may
  result in a structure that is no longer a valid pseudo-line
  arrangement (right).  }
\label{fig_proof_triangle}
\end{figure}

This argument does not generalize to pseudo-line arrangements.  See
\figurename~\ref{fig_proof_triangle}~(right).  Actually the question
whether all simple bichromatic pseudo-line arrangements have bichromatic
triangles is by now open for several years.  The crucial property of
lines used in the above argument is that shifting a subset of the
lines vertically again yields an arrangement, i.e., the shift does
not introduce multiple crossings.  We were wondering whether any
pseudo-line arrangement can be drawn s.t.\ this property holds.  In
this paper, we show that this is not true and that arrangements where
this is possible constitute an interesting class of pseudo-line
arrangements.

Define an \emph{arrangement of pseudo-lines} as a finite family of
$x$-monotone bi-infinite connected curves (called \emph{pseudo-lines})
in the Euclidean plane s.t.\ each pair of pseudo-lines intersects in
exactly one point, at which they cross.  For simplicity, we consider
the $n$ pseudo-lines $\{\ell_1, \dots, \ell_n\}$ to be indexed from
$1$ to $n$ in top-bottom order at left infinity.%
\footnote{Pseudo-line arrangements are often studied in the real
  projective plane, with pseudo-lines being simple closed curves that
  do not separate the projective plane.  All arrangements can be
  represented by $x$-monotone arrangements~\cite{semispaces}.  As
  $x$-monotonicity is crucial for our setting and the line at infinity
  plays a special role, we use the above definition.}  A pseudo-line
arrangement is \emph{simple} if no three pseudo-lines meet in one
point; if in addition no two pairs of pseudo-lines cross at the same
$x$-coordinate we call it \emph{$x$-simple.}

An \emph{arrangement of approaching pseudo-lines} is an arrangement of
pseudo-lines where each pseudo-line~$\ell_i$ is represented by
function-graph $f_i(x)$, defined for all $x \in \mathbb{R}$, s.t., for
any two pseudo-lines $\ell_i$ and $\ell_j$ with $i < j$, the function
$x \mapsto f_i(x) - f_j(x)$ is monotonically decreasing and
surjective.  This implies that the pseudo-lines approach each other
until they cross, and then they move away from each other, and exactly
captures our objective to vertically translate pseudo-lines in an
arbitrary way while maintaining the invariant that the collection of
curves is a valid pseudo-line arrangement
(If $f_i-f_j$ is not surjective the crossing of pseudo-lines
$i$ and $j$ may be lost upon vertical translations.) 
For most of our results, we consider the pseudo-lines to be
\emph{strictly approaching}, i.e., the function is strictly
decreasing.  For simplicity, we may sloppily call arrangements of
approaching pseudo-lines \emph{approaching arrangements}.

In this paper, we identify various notable properties of approaching
arrangements.  In Section~\ref{sec_manipulating}, we show how to
modify approaching arrangements and how to decide whether an
arrangement is $x$-isomorphic to an approaching arrangement in
polynomial time.  Then, we show a specialization of Levi's enlargement
lemma for approaching pseudo-lines and use it to show that
arrangements of approaching pseudo-lines are dual to generalized
configurations of points with an underlying arrangement of approaching
pseudo-lines.  In Section~\ref{sec_properties}, we describe
arrangements which have no realization as approaching arrangement.  We
also show that asymptotically there are as many approaching
arrangements as pseudo-line arrangements.  We conclude in
Section~\ref{sec_higher} with a generalization of the notion of being
approaching to three dimensions; it turns out that arrangements of
approaching pseudo-planes are characterized by the combinatorial
structure of the family of their normal vectors at all points.

\paragraph{Related work.}
Restricted representations of Euclidean pseudo-line arrangements have
been considered already in early work about pseudo-line arrangements.
Goodman~\cite{goodman_proof} shows that every arrangement has a representation as
a \emph{wiring diagram}. More recently there have been results on drawing arrangements
as convex polygonal chains with few bends~\cite{convex_arc_drawings}
and on small grids~\cite{small_grids}. Goodman and
Pollack~\cite{polynomial_realization} consider arrangements whose
pseudo-lines are the function-graphs of polynomial functions with
bounded degree. In particular, they give bounds on the degree
necessary to represent all isomorphism classes of pseudo-line
arrangements.
Generalizing the setting to higher dimensions (by requiring that any
pseudo-hyperplane can be translated vertically while maintaining that
the family of hyperplanes is an arrangement) we found that such
approaching arrangements are representations of \emph{Euclidean
  oriented matroids}, which are studied in the context of pivot rules
for oriented matroid programming
(see~\cite[Chapter~10]{oriented_matroids}).

\section{Manipulating approaching arrangements}
\label{sec_manipulating}

Lemma~\ref{lem_polygonal} shows that we can make the pseudo-lines of
approaching arrangements piecewise linear. This is similar to the
transformation of Euclidean pseudo-line arrangements to equivalent
wiring diagrams. Before stating the lemma it is appropriate to
briefly discuss notions of isomorphism for arrangements of
pseudo-lines.

%

Since we have defined pseudo-lines as $x$-monotone curves there are
two faces of the arrangement containing the points at $\pm$infinity of
vertical lines.  These two faces are the \emph{north-face} and the
\emph{south-face}.  A \emph{marked arrangement} is an arrangement
together with a distinguished unbounded face, the
north-face. Pseudo-lines of marked arrangements are oriented such that
the north-face is to the left of the pseudo-line. We think of
pseudo-line arrangements and in particular of approaching arrangements
as being marked arrangements.

Two pseudo-line arrangements are \emph{isomorphic} iff there is an
isomorphism of the induced cell complexes which maps north-face to
north-face and respects the induced orientation of the pseudo-lines.

Two pseudo-line arrangements are \emph{$x$-isomorphic} iff a sweep
with a vertical line meets the crossings in the
same order.

Both notions can be described in terms of allowable sequences. An
\emph{allowable sequence} is a sequence of permutations tarting
with the identity permutation $\identity = (1, \dots, n)$ in which (i) a
permutation is obtained from the previous one by the reversal of one
or more non-overlapping substrings, and (ii) each pair is reversed
exactly once. An allowable sequence is \emph{simple} if two adjacent
permutations differ by the reversal of exactly two adjacent elements.

Note that the permutations in which a vertical sweep line intersects
the pseudo-lines of an arrangement gives an allowable sequence.
We refer to this as \emph{the allowable sequence} of the arrangement and
say that the arrangement \emph{realizes} the allowable sequence.
Clearly two arranements are $x$-isomorphic if they realize the same
allowable sequence.

Replacing the vertical line for the sweep by a moving curve (vertical
pseudo-line) which joins north-face and south-face and intersects each
pseudo-line of the arrangement exactly once we get a notion of
pseudo-sweep. A pseudo-sweep typically has various options for making
progress, i.e., for passing a crossing of the arrangement.  Each
pseudo-sweep also produces an allowable sequence.  Two arrangements
are isomorphic if their pseudo-sweeps yield the same collection of
allowable sequences or equivalently if there are pseudo-sweeps on the
two arrangements which produce the same allowable sequence.

\begin{lemma}\label{lem_polygonal}
  For any arrangement of approaching pseudo-lines, there is an
  $x$-isomorphic arrangement of approaching polygonal curves
  (starting and ending with a ray).  If the allowable sequence of the
  arrangement is simple, then there exists such an arrangement without
  crossings at the bends of the polygonal curves.
\end{lemma}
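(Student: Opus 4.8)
The plan is to replace each pseudo-line by a polygonal curve obtained by sampling it at one $x$-value inside each ``slab'' delimited by the events of the vertical sweep, interpolating linearly in between, and capping the result with two rays whose slopes are chosen so that consecutive differences stay surjective. Write $\mathcal{A}=\{f_1,\dots,f_n\}$ for the given approaching arrangement and let $c_1<\dots<c_p$ be the distinct $x$-coordinates at which the sweep permutation changes. For the general statement, let $g_i$ be the polygonal curve through the points $(c_r,f_i(c_r))$, $r=1,\dots,p$, consisting of the segments between consecutive such points together with a ray on $(-\infty,c_1]$ and a ray on $[c_p,\infty)$; choose the slopes of the left rays strictly increasing in the index $i$, and likewise for the right rays.

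I would then verify two things: that $\{g_i\}$ is again an arrangement of approaching polygonal curves, and that it is $x$-isomorphic to $\mathcal{A}$. For $i<j$ the difference $g_i-g_j$ is continuous and piecewise linear; on a slab $[c_{r-1},c_r]$ it is the affine interpolation of $f_i(c_{r-1})-f_j(c_{r-1})$ and $f_i(c_r)-f_j(c_r)$, which are (strictly) ordered because $f_i-f_j$ is (strictly) decreasing, so every piece has non-positive (negative) slope; the two rays contribute slopes $m_i^--m_j^-<0$ and $m_i^+-m_j^+<0$ by the choice above. Hence $g_i-g_j$ is (strictly) decreasing and, thanks to the rays, surjective, so it has exactly one zero and $g_i,g_j$ cross exactly once. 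For $x$-isomorphism the crucial observation is that in an approaching arrangement a maximal substring $\ell_{a_1},\dots,\ell_{a_s}$ that is reversed in a single sweep step at $x=c_r$ must pass through one common point: for each $t$ the pseudo-lines $\ell_{a_t}$ and $\ell_{a_{t+1}}$ are adjacent and occur in opposite vertical order immediately before and after $c_r$, so the decreasing surjective function $f_{a_t}-f_{a_{t+1}}$ has its unique zero at $c_r$, giving $f_{a_t}(c_r)=f_{a_{t+1}}(c_r)$ and hence $f_{a_1}(c_r)=\dots=f_{a_s}(c_r)$. Therefore at $x=c_r$ the curves $g_i$ of one reversed substring meet in the single point $(c_r,f_{a_1}(c_r))$, curves of two distinct non-interleaving reversed substrings stay strictly separated (one support lies above the other on both sides of $c_r$), and curves not participating at $c_r$ stay strictly separated from the rest; and on a slab $[c_{r-1},c_r]$ no pair $g_i-g_j$ changes sign, being affine with endpoint values of equal sign. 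So the events of $\{g_i\}$ are exactly those of $\mathcal{A}$, at the same $x$-coordinates and with the same incidences, and the two arrangements realize the same allowable sequence.

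For the simple case one additionally wants the bends to avoid the crossings. Since then every $c_r$ is a simple crossing of two adjacent pseudo-lines, I would instead fix reals $v_0<c_1<v_1<c_2<\dots<c_p<v_p$, let $g_i$ agree with $f_i$ at $v_0,\dots,v_p$, be the segment between consecutive sample points on each slab $[v_{r-1},v_r]$, and be a ray on $(-\infty,v_0]$ and on $[v_p,\infty)$ with slopes as before. On the slab $[v_{r-1},v_r]$ the pair transposed at $c_r$ has opposite-sign endpoint values, so $g_i-g_j$ vanishes once, strictly inside the slab, while every other pair keeps a constant sign there; hence the crossings lie in the open slabs, the allowable sequence is preserved, and every bend of $g_i$ sits at some $v_k$, where $g_i(v_k)=f_i(v_k)\neq f_j(v_k)=g_j(v_k)$, so no bend carries a crossing. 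The monotonicity and surjectivity of $g_i-g_j$ are checked exactly as before.

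I expect the main obstacle to be the general case: one must be sure that the piecewise-linear interpolation neither introduces spurious crossings nor destroys the simultaneity of events sharing an $x$-coordinate, and the observation that ``approaching'' forces a multiply-reversed substring to be a genuine multiple point is precisely what makes the event-aligned construction faithful. The remaining steps---monotonicity of the piecewise-linear differences, choosing ray slopes to restore surjectivity, and continuity at the bends---are routine.
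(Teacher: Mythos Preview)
Your proof is correct and follows essentially the same construction as the paper: sample the pseudo-lines at the $x$-coordinates of the crossings (respectively, between them in the simple case), interpolate linearly, and cap with rays of suitably ordered slopes. You supply more detail than the paper does---in particular the verification that the piecewise-linear differences stay monotone and surjective, and the check that events at a shared $x$-coordinate are faithfully reproduced---but the underlying idea is identical.
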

\begin{proof}
  Consider the approaching pseudo-lines and add a vertical
  `helper-line' at every crossing.  Connect the intersection points of
  each pseudo-line with adjacent helper-lines by segments.  This
  results in an arrangement of polygonal curves between the leftmost
  and the rightmost helper-line.  See \fig{fig_polygonal}.  Since the
  original pseudo-lines were approaching, these curves are approaching
  as well; the signed distance between the intersection points with
  the vertical lines is decreasing, and this property is maintained by
  the linear interpolations between the points.  To complete the
  construction, we add rays in negative $x$-direction starting at the
  intersection points at the first-helper line; the slopes of the rays
  are to be chosen s.t.\ their order reflects the order of the
  original pseudo-lines at left infinity.  After applying the
  analogous construction at the rightmost helper-line, we obtain the
  $x$-isomorphic arrangement.  If the allowable sequence of the
  arrangement is simple, we may choose the helper-lines between the
  crossings and use a corresponding construction.  This avoids an
  incidence of a bend with a crossing.
\end{proof}

\begin{figure}[ht]
\centering
\includegraphics[scale=.8]{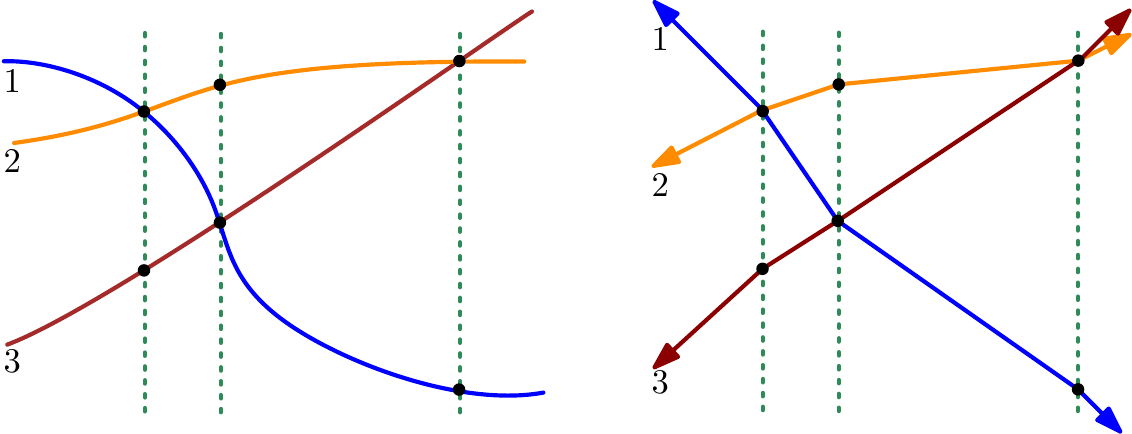}
\caption{Transforming an arrangement of approaching pseudo-lines into
  an isomorphic one of approaching polygonal pseudo-lines.}
\label{fig_polygonal}
\end{figure}

The construction used in the proof yields pseudo-lines being
represented by polygonal curves with a quadratic number of bends.
It might be interesting to consider the problem of minimizing bends in
such polygonal representations of arrangements. 
Two simple operations which can help to reduce the number of bends
are \emph{horizontal streching}, i.e., a change of the $x$-coordinates
of the helper-lines which preserves their left-to-right order,
and \emph{vertical shifts} which can be applied a
helper-line and all the points on it. Both operations
preserve the $x$-isomorphism class.

The two operations are crucial for our next result, where we show
that the intersection points with the helper-lines can be obtained by
a linear program.  Asinowski~\cite{sub_allowable} defines a
\emph{suballowable sequence} as a sequence obtained from an allowable
sequence by removing an arbitrary number of permutations from it.  An
arrangement thus realizes a suballowable sequence if we can obtain
this suballowable sequence from its allowable sequence.

\begin{theorem}\label{thm_realizability}
  Given a suballowable sequence, we can decide in polynomial time
  whether there is an arrangement of approaching pseudo-lines with
  such a sequence.
\end{theorem}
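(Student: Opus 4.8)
First I would reduce the problem to the construction used in the proof of Lemma~\ref{lem_polygonal}: given a suballowable sequence $\Sigma = (\pi_0, \pi_1, \dots, \pi_m)$ on $n$ elements (with $\pi_0 = \identity$ by convention, or after relabeling), I want to place the $n$ pseudo-lines as polygonal curves with one helper-line for each of the $m+1$ permutations, so that the vertical order of the curve-points on helper-line $k$ is exactly $\pi_k$, and so that between consecutive helper-lines the curves are approaching. Because horizontal stretching preserves the $x$-isomorphism class, I may fix the helper-lines at $x = 0, 1, \dots, m$ once and for all, and put a leftward ray before helper-line $0$ and a rightward ray after helper-line $m$ with slopes chosen to reflect the orders $\pi_0$ and $\pi_m$ at $\pm\infty$. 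So the only unknowns are the heights $y_{i,k}$ (the $y$-coordinate of pseudo-line $\ell_i$ on helper-line $k$), for $i \in \{1,\dots,n\}$ and $k \in \{0,\dots,m\}$.

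The key observation is that the constraint ``the arrangement is approaching and realizes $\Sigma$'' is a conjunction of linear inequalities in the variables $y_{i,k}$. For each pair $i < j$ (in the left-infinity order) I introduce the gap function $g_{ij}(x) = f_i(x) - f_j(x)$, which on the piecewise-linear curves is determined by the values $g_{ij}(k) = y_{i,k} - y_{j,k}$ at the helper-lines. Being strictly approaching on the segment between helper-lines $k$ and $k+1$ means exactly $g_{ij}(k) > g_{ij}(k+1)$; and since a linear function on $[k,k+1]$ is determined by its endpoint values, strict decrease of the interpolant is equivalent to this endpoint inequality. So monotone decrease on all of $[0,m]$ is the chain $g_{ij}(0) > g_{ij}(1) > \dots > g_{ij}(m)$. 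Surjectivity of $g_{ij}$ over $\mathbb{R}$ is then handled by the rays: we require $g_{ij}(0) > 0 > g_{ij}(m)$ if and only if the pair $(i,j)$ has been reversed by step $m$ (equivalently, $i$ precedes $j$ in $\pi_m$ iff $\dots$), and the ray slopes make the gap tend to $+\infty$ on the left and $-\infty$ on the right (or vice versa), so that the zero of the gap — the crossing — is attained. Finally, ``realizes $\Sigma$'' means that on helper-line $k$ the order of the $y_{i,k}$ agrees with $\pi_k$: for every pair, a strict inequality $y_{i,k} > y_{j,k}$ or $y_{j,k} > y_{i,k}$ according to $\pi_k$. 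All of these are strict linear inequalities, hence I can write a linear program (e.g., introduce a slack $\varepsilon \ge 0$, replace each strict inequality $a > b$ by $a \ge b + \varepsilon$, and maximize $\varepsilon$); the suballowable sequence is realizable by an approaching arrangement if and only if this LP has a solution with $\varepsilon > 0$. Since the LP has $O(nm)$ variables and $O(n^2 m)$ constraints, all with polynomially-bounded bit-size, it is solvable in polynomial time.

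The one point that needs genuine care — and which I expect to be the main obstacle — is the consistency between the crossing pattern forced by $\Sigma$ and the monotonicity of the gaps: I must argue that the system is satisfiable whenever $\Sigma$ is a genuine (sub)allowable sequence, i.e., that the definition of allowable sequence (each pair reversed exactly once, reversals of non-overlapping substrings) is \emph{exactly} what makes the inequality chains $g_{ij}(0) > \dots > g_{ij}(m)$ simultaneously consistent with the prescribed orders on the helper-lines and with each pair changing sign exactly once. Concretely, if the pair $(i,j)$ is reversed at step $k$ in $\Sigma$, then I need the inequalities $g_{ij}(k-1) > 0 > g_{ij}(k)$ to be imposable together with $g_{ij}$ decreasing; since a pair is reversed only once this is the only sign change, so there is no conflict — but when two pairs are reversed at the same step (a non-trivial reversal of a substring of length $>2$), I should double-check that the forced orderings inside the reversed block on the two adjacent helper-lines, together with the single-sign-change requirement, are still jointly realizable by real numbers. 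This amounts to checking that the inequality graph has no directed cycle, which follows because on each helper-line the constraints merely describe a total order, and across helper-lines the constraints only ever \emph{weaken} a gap (push it toward and then past zero), never demand it both increase and decrease. Making this ``no contradictory cycle'' argument precise — essentially an appeal to the fact that a consistent family of total orders that changes by substring reversals can always be realized with linearly-decreasing gaps — is the crux; once it is in place, feasibility of the LP is immediate and the polynomial-time claim follows from standard linear programming. I would also note that the argument simultaneously proves realizability of \emph{allowable} (not just suballowable) sequences, since an allowable sequence is a special case, and that for a simple sequence one additionally gets curves whose bends avoid crossings, matching the second part of Lemma~\ref{lem_polygonal}.
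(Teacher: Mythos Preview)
Your linear-programming reduction is exactly the paper's approach: introduce a variable $y_{i,k}$ for the height of pseudo-line $\ell_i$ on helper-line $k$, impose the order constraints $y_{i,k} \geq y_{j,k}+1$ dictated by $\pi_k$, and impose the approaching constraints $y_{i,k}-y_{j,k}\geq y_{i,k+1}-y_{j,k+1}$ for $i<j$. Feasibility of this polyhedron is equivalent to realizability as an approaching arrangement, and feasibility can be tested in polynomial time. That already \emph{is} the proof.

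Where you go wrong is everything after ``The one point that needs genuine care.'' You are trying to show that the LP is feasible for \emph{every} suballowable sequence, and you sketch a ``no contradictory cycle'' argument to that effect. But the theorem does not claim this, and in fact it is false: later in the paper (Lemma~\ref{lem_three_like_lines} and Corollary~\ref{cor_no_suballowable}) it is shown that there exist simple suballowable sequences $(\identity,\pi_1,\pi_2)$ that are \emph{not} realizable by any approaching arrangement---Asinowski's six-element example is one. So the ``crux'' you identify is not merely unnecessary, it would, if your acyclicity argument went through, prove a false statement. The theorem is a \emph{decision} result: the LP may well be infeasible, and in that case the algorithm correctly reports ``no.'' Drop the entire second half of the proposal and you have the paper's proof.
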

\begin{proof}
  We attempt to construct a polygonal pseudo-line arrangement for the
  given suballowable sequence.  As discussed in the proof of
  Lemma~\ref{lem_polygonal}, we only need to obtain the points in
  which the pseudo-lines intersect vertical helper-lines through
  crossings.  The allowable sequence of the arrangement is exactly the
  description of the relative positions of these points.  We can
  consider the $y$-coordinates of pseudo-line $\ell_i$ at a vertical
  helper-line $v_c$ as a variable $y_{i,c}$ and by this encode the
  suballowable sequence as a set of linear inequalities on those
  variables, e.g., to express that $\ell_i$ is above $\ell_j$ at $v_c$
  we use the inequality $y_{i,c} \geq y_{j,c} +1$.  Further, the
  curves are approaching iff
  $y_{i, c} - y_{j,c} \geq y_{i,c+1} - y_{j, c+1}$ for all
  $1\leq i<j \leq n$ and~$c$.  These constraints yield a polyhedron
  (linear program) that is non-empty (feasible) iff there exists such
  an arrangement.  Since the allowable sequence of an arrangement of
  $n$ pseudolines consists of $\binom{n}{2}+1$ permutations the linear
  program has $O(n^4)$ inequalities in $O(n^3)$ variables.  Note that
  it is actually sufficient to have constraints only for neighboring
  points along the helper lines, this shows that $O(n^3)$
  inequalities are sufficient.
\end{proof}

Let us emphasize that deciding whether an allowable sequence is
realizable by a line arrangement is an $\exists\mathbb{R}$-hard
problem~\cite{allowable_er}, and thus not even known to be in NP.
While we do not have a polynomial-time algorithm for deciding whether
there is an isomorphic approaching arrangement for a given
pseudo-line arrangement, Theorem~\ref{thm_realizability} tells us that
the problem is in NP, as we can give the order of the crossings
encountered by a sweep as a certificate for a realization.  The
corresponding problem for lines is also
$\exists\mathbb{R}$-hard~\cite{mnev}.

The following observation is the main property that makes approaching
pseudo-lines interesting.

\begin{obs}\label{obs:main}
  Given an arrangement $A$ of strictly approaching pseudo-lines and a
  pseudo-line $\ell\in A$, any vertical translation of $\ell$ in $A$
  results again in an arrangement of strictly approaching
  pseudo-lines.
\end{obs}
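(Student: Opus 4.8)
The plan is to verify directly from the defining inequalities that vertical translation of a single pseudo-line preserves the ``strictly approaching'' property for every pair. Write $A = \{\ell_1, \dots, \ell_n\}$ with $\ell_k$ represented by $f_k(x)$, and suppose we translate $\ell_m$ by a constant $t \in \mathbb{R}$, replacing $f_m$ by $\tilde f_m := f_m + t$ and leaving all other functions unchanged. We must check three things: that the translated family is still an arrangement (each pair crosses exactly once), that each pair has a strictly decreasing difference function, and that each difference function is still surjective onto $\mathbb{R}$.

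First I would handle the pairs not involving $\ell_m$: for these the difference function is untouched, so nothing needs to be checked. For a pair $\{\ell_i, \ell_m\}$ with $i < m$, the new difference is $x \mapsto f_i(x) - \tilde f_m(x) = (f_i(x) - f_m(x)) - t$, i.e.\ the old difference function shifted by the constant $-t$. Shifting by a constant preserves both strict monotonic decrease and surjectivity onto $\mathbb{R}$; the analogous computation works for a pair $\{\ell_m, \ell_j\}$ with $m < j$, where the difference becomes $(f_m(x) - f_j(x)) + t$. Since each such difference function is strictly decreasing and surjective, it has a unique zero, so the translated curves still pairwise cross exactly once and the crossing is a genuine transversal crossing; hence the translated family is again an arrangement of strictly approaching pseudo-lines. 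One small bookkeeping point: translation may change the top-to-bottom order at left infinity, but the differences remain surjective so the indices can simply be relabelled, and none of the approaching conditions depend on the labelling beyond the convention $i<j$, which is symmetric under swapping the roles of the two curves in a pair (replacing a difference by its negative preserves strict monotonicity, now increasing, and surjectivity — so after relabelling we again get strictly decreasing differences).

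There is essentially no hard part here; the statement is almost immediate once one notes that adding a constant to one function shifts each affected pairwise difference by a constant. The only place that requires a moment's care is confirming that surjectivity (not merely monotonicity) is what guarantees the crossing survives the translation — this is exactly the point flagged in the paper's definition, and it is why the hypothesis of \emph{strictly} approaching pseudo-lines (or at least the surjectivity clause) cannot be dropped: without surjectivity, translating $\ell_m$ far enough could push a pairwise difference to be bounded away from $0$, destroying a crossing. I would state the argument in two or three sentences along these lines, emphasizing the constant-shift observation and the role of surjectivity.
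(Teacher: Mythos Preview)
Your argument is correct and is precisely the direct verification the paper intends; in fact the paper states Observation~\ref{obs:main} without proof, treating it as immediate from the definition that adding a constant to one $f_m$ shifts each affected pairwise difference by a constant. One minor point: your worry about relabelling is unnecessary, since for $i<m$ the shifted difference $f_i-\tilde f_m=(f_i-f_m)-t$ is still surjective and strictly decreasing, hence still tends to $+\infty$ as $x\to-\infty$ (and likewise $\tilde f_m-f_j\to+\infty$ for $m<j$), so the top-to-bottom order at left infinity is in fact preserved by the translation.
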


Doing an arbitrary translation, we may run into trouble when the
pseudo-lines are not strictly approaching.  In this case it can happen
that two pseudo-lines share an infinite number of points.  The
following lemma allows us to ignore non-strictly approaching
arrangements for many aspects.

\begin{lemma}\label{lem_strictly}
  Any simple arrangement of approaching pseudo-lines is homeomorphic
  to an $x$-isomorphic arrangement of strictly approaching
  pseudo-lines.
\end{lemma}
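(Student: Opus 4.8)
The plan is to start from a simple arrangement of (weakly) approaching pseudo-lines and perturb the function-graphs so that every difference $f_i-f_j$ becomes \emph{strictly} decreasing, without changing the combinatorial type. First I would invoke Lemma~\ref{lem_polygonal} to replace the given arrangement by an $x$-isomorphic one whose pseudo-lines are polygonal curves (starting and ending with a ray), and since the arrangement is simple, by the same lemma we may assume the bends do not coincide with crossings. Now each $f_i$ is piecewise linear with breakpoints among a finite set of $x$-coordinates $x_1 < \dots < x_m$ (the helper-line positions), and for $i<j$ the piecewise-linear function $f_i-f_j$ is non-increasing; it fails to be strictly decreasing exactly on those slabs $[x_c,x_{c+1}]$ where $\ell_i$ and $\ell_j$ run parallel, i.e.\ where $f_i-f_j$ is locally constant.

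The key idea is a single global ``tilt'': replace $f_i(x)$ by $f_i(x) - \varepsilon_i \cdot g(x)$ for a fixed strictly increasing function $g$ (say $g(x)=x$, suitably bent into rays outside the relevant range so the curves stay $x$-monotone bi-infinite), where $\varepsilon_1 > \varepsilon_2 > \dots > \varepsilon_n > 0$ are chosen very small. Then for $i<j$ the new difference is $(f_i-f_j)(x) - (\varepsilon_i-\varepsilon_j)g(x)$, and since $\varepsilon_i-\varepsilon_j>0$ and $g$ is strictly increasing, this is now \emph{strictly} decreasing; surjectivity is inherited because we only added an unbounded monotone term. The steps I would carry out in order: (1) reduce to the polygonal model via Lemma~\ref{lem_polygonal}; (2) fix the breakpoint $x$-coordinates and record, for each slab, the linear pieces; (3) apply the tilt $f_i \mapsto f_i - \varepsilon_i x$ on the bounded range spanned by the breakpoints, extending by rays of appropriately perturbed slope on the two unbounded sides so the result is still a polygonal approaching arrangement with the correct slope order at $\pm\infty$; (4) choose the $\varepsilon_i$ small enough that no crossing changes its relative order — this is possible because at every breakpoint and at $\pm\infty$ the original arrangement has only \emph{strict} inequalities between consecutive pseudo-lines (it is simple and the bends avoid crossings), so there is a positive slack that survives a sufficiently small perturbation; (5) check that each pair $\ell_i,\ell_j$ now has strictly decreasing difference, hence still crosses exactly once, so the perturbed family is a genuine arrangement, $x$-isomorphic (hence isomorphic) to the original, and strictly approaching. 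Finally, the homeomorphism of the plane carrying one arrangement to the other is obtained in the standard way from the $x$-isomorphism (sweeping and interpolating), so the two arrangements are homeomorphic as claimed.

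The main obstacle I anticipate is step (4): making the choice of the $\varepsilon_i$ uniform. One has to verify that the only places where the combinatorics could change are the breakpoints $x_c$ and the behavior at $\pm\infty$ (between breakpoints each pseudo-line is a segment and the ordering is determined by the endpoint values at the two bounding helper-lines), and that at each such place the perturbation moves every coordinate by at most $\max_i \varepsilon_i \cdot \max|g|$ on the bounded range, which can be driven below the minimum gap. A secondary point requiring care is keeping the curves $x$-monotone and bi-infinite after the tilt: on the unbounded rays we cannot subtract $\varepsilon_i x$ literally (it would be fine for monotonicity, but we must also preserve the prescribed slope order at infinity), so instead we perturb the \emph{slopes} of the two terminal rays of each pseudo-line by distinct small amounts $\delta_i$ with the same decreasing pattern, which simultaneously gives strict monotonicity of the differences on the rays and preserves the order at $\pm\infty$. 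With these adjustments the argument goes through and yields the desired strictly approaching, $x$-isomorphic — and hence homeomorphic — arrangement.
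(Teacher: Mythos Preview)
Your proof is correct and takes a genuinely different route from the paper's. The paper first passes to a polygonal model via Lemma~\ref{lem_polygonal}, then repairs the failure of strictness \emph{locally and iteratively}: it finds the leftmost slab containing two parallel segments, shifts the left helper-line of that slab a little further left, and redraws the slab from the new helper-line; this breaks the parallelism there without creating new coincidences (simplicity is used here), and one repeats until no parallel segments remain. Your approach instead applies a single \emph{global} perturbation $f_i \mapsto f_i - \varepsilon_i\, g$, exploiting the algebraic fact that a non-increasing function plus a strictly decreasing one is strictly decreasing.

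What each buys: your global tilt is cleaner and one-shot, with the strict-approaching property following immediately from the inequality $(\varepsilon_i-\varepsilon_j)>0$ for $i<j$; the paper's local repair avoids having to quantify a uniform $\varepsilon$, and keeps every step within the polygonal helper-line framework (so the ``$x$-isomorphic'' claim is essentially by construction slab by slab). One small point: your caveat about the unbounded rays is unnecessary. If the left ray of $\ell_i$ has slope $s_i$, then after subtracting $\varepsilon_i x$ the slope becomes $s_i-\varepsilon_i$; since $s_i\le s_j$ for $i<j$ (this is what ``approaching'' says on the rays) and $\varepsilon_i>\varepsilon_j$, you get $s_i-\varepsilon_i < s_j-\varepsilon_j$ automatically, so the slope order at $\pm\infty$ is preserved and even becomes strict. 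You can drop the separate ray-perturbation step.
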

\begin{proof}
  Given an arrangement $A$, construct a polygonal arrangement $A'$ as
  described for Lemma~\ref{lem_polygonal}.  If the resulting
  pseudo-lines are strictly approaching, we are done.  Otherwise,
  consider the rays that emanate to the left.  We may change their
  slopes s.t.\ all the slopes are different and their relative order
  remains the same.  Consider the first vertical slab defined by two
  neighboring vertical lines $v$ and $w$ that contains two segments
  that are parallel (if there are none, the arrangement is strictly
  approaching). Choose a vertical line $v'$ slightly to the left of
  the slab and use $v'$ and $w$ as helper-lines to redraw the
  pseudo-lines in the slab.  Since the arrangement is simple the
  resulting arrangement is $x$-isomorphic and it has fewer parallel
  segments. Iterating this process yields the
  desired result.
\end{proof}

\begin{lemma}\label{lem_simple_sweep}
  If $A$ is an approaching arrangement with a non-simple
  allowable sequence, then there exists an approaching arrangement $A'$
  whose allowable sequence is a refinement of the allowable sequence
  of $A$, i.e., the sequence of $A'$ may have additional permutations
  between consecutive pairs $\pi,\pi'$ in the sequence of $A$.
\end{lemma}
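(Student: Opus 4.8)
The plan is to turn a non-simple allowable sequence into a simple one by perturbing the arrangement locally at each ``fat'' transition, where several disjoint reversals happen simultaneously or where a single reversal involves more than two elements. Starting from the original arrangement~$A$, I would first invoke Lemma~\ref{lem_polygonal} to replace it with an $x$-isomorphic arrangement of approaching polygonal curves; this is convenient because the realizability argument behind Theorem~\ref{thm_realizability} shows that the freedom we need is exactly the freedom to move the intersection points with the helper-lines, subject to the approaching (difference-monotonicity) inequalities.

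Next I would focus on a single helper-line~$v_c$ at which the permutation changes non-simply. There are two phenomena to handle. (1) A multiple crossing: $k \ge 3$ pseudo-lines pass through one point. Here I would split the helper-line into $\binom{k}{2}$ nearby vertical lines and realize the $k$ pseudo-lines inside that thin slab as a small wiring-diagram-style sub-arrangement in which the $k$ lines cross one pair at a time, in any order consistent with the surrounding arrangement; locally this is a tiny line arrangement in ``general position'' restricted to the slab, hence realizable by strictly approaching (indeed linear) pieces, and the difference between any such reversed pair changes monotonically, so the approaching property is preserved. (2) Several disjoint reversals at the same $x$-coordinate: I would simply assign the helper-lines of the individual crossings distinct $x$-coordinates (a horizontal stretch, in the language of the paragraph after Lemma~\ref{lem_polygonal}), which separates the reversals in the sweep order without changing the combinatorial type. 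In both cases the new arrangement $A'$ realizes an allowable sequence that, between the two consecutive permutations $\pi,\pi'$ of $A$ straddling $v_c$, inserts the intermediate permutations recording the crossings one (pair) at a time; outside the perturbed slabs nothing changes, so the sequence of $A'$ is a refinement of that of $A$.

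Formally I would do this for all non-simple transitions at once (they occur at distinct $x$-coordinates after the stretch, so the perturbations are independent) and argue that the resulting curves are still globally approaching: the only pairs whose vertical difference is modified are those crossing inside a perturbed slab, and for each such pair the difference goes from positive to negative monotonically across the slab, which is exactly the approaching condition; for all other pairs the curves are untouched. One clean way to package the ``local realizability'' claim is again via the linear program of Theorem~\ref{thm_realizability}: write down the inequalities for the desired refined (now simple) allowable sequence, keeping the $y$-coordinates at the unperturbed helper-lines fixed to their values in~$A$, and observe the system is feasible because a feasible point for $A$ plus small independent perturbations in each slab satisfies it.

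The main obstacle is the multiple-crossing case: one has to be sure that resolving a $k$-fold crossing into a generic little bundle can be done with \emph{approaching} pieces while respecting the order in which the surrounding arrangement forces the $k$ curves to be at the left and right boundaries of the slab. This is really the statement that any small piece of an allowable sequence on $k$ elements that starts at one permutation and ends at its reverse is realized by some arrangement of $k$ approaching pseudo-lines on that slab; since on $k$ elements \emph{every} such local pattern is realized even by straight lines in suitably general position, and straight lines are trivially approaching, this goes through — but checking that the linear interpolation stays monotone where the slab meets the rest of the arrangement is the one place where care is needed, and it is exactly the same interpolation argument used in the proof of Lemma~\ref{lem_polygonal}.
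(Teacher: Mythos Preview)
Your construction can be made to work, but it takes a much longer road than the paper does. The paper's proof is essentially one line: since the allowable sequence is non-simple, there is either a vertex where three or more pseudo-lines meet or two crossings sharing an $x$-coordinate; pick a single pseudo-line~$\ell$ participating in such a degeneracy and translate it vertically by a sufficiently small amount. By Observation~\ref{obs:main} the result is still an approaching arrangement, and the tiny shift resolves that degeneracy, so the new allowable sequence strictly refines the old one. No polygonalization, no helper-line splitting, no linear program is needed---the whole point of the ``approaching'' definition is that vertical translations are free.

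Two remarks on your write-up. First, the lemma only asks for \emph{some} proper refinement, not a simple allowable sequence; you are proving a stronger statement, which is why your machinery is heavier. Second, your justification of the multiple-crossing case asserts that ``on $k$ elements every such local pattern is realized even by straight lines in suitably general position.'' This is false already for small~$k$: not every simple allowable sequence from a permutation to its reverse is stretchable, and the paper itself (Corollary~\ref{cor_no_suballowable}) shows that some are not even realizable by approaching pseudo-lines. Fortunately you do not need the universal claim---since you get to \emph{choose} the refinement, replacing the $k$-fold crossing by any fixed pencil of $k$ lines in general position gives one particular simple local pattern, and that suffices. With that correction your approach is sound, but the vertical-translation trick is the intended shortcut and you should be aware of it.
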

\begin{proof}
  Since its allowable sequence is non-simple arrangement $A$ has a
  crossing point where more than two psudo-lines cross or $A$ has several
  crossings with the same $x$-coordinate.  Let $\ell$ be a pseudo-line
  participating in such a degeneracy.  Translating $\ell$ slightly in
  vertical direction a degeneracy is removed and the allowable
  sequence is refined.
\end{proof}

Ringel's homotopy theorem~\cite[Thm.~6.4.1]{oriented_matroids} tells
us that given a pair $A$, $B$ of pseudo-line arrangements, $A$ can be
transformed to $B$ by homeomorphisms of the plane and so-called
\emph{triangle flips}, where a pseudo-line is moved over a crossing.
Within the subset of arrangements of approaching pseudo-lines, the
result still holds.  We first show a specialization of Ringel's
isotopy result~\cite[Prop.~6.4.2]{oriented_matroids}:

\begin{lemma}\label{lem_transform_sweep_equivalent}
  Two $x$-isomorphic arrangements of approaching pseudo-lines can be
  transformed into each other by a homeomorphism of the plane s.t.\
  all intermediate arrangements are $x$-isomorphic and approaching.
\end{lemma}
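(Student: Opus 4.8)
My plan is to realize the transformation as a short concatenation of continuous one-parameter families of approaching arrangements, each family swept out by an explicit homeomorphism of the plane and passing only through arrangements $x$-isomorphic to $A$. The basic principle is that convex combinations of the defining functions preserve everything needed: if $C$ and $C'$ are approaching arrangements with pseudo-lines $f_i$ and $f_i'$, then for every $t\in[0,1]$ the curves $f_i^t := (1-t)f_i + tf_i'$ again form an approaching arrangement, since each $f_i^t - f_j^t = (1-t)(f_i-f_j)+t(f_i'-f_j')$ is a convex combination of monotonically decreasing surjective functions and hence is itself monotonically decreasing and surjective. If in addition $C$ and $C'$ are $x$-isomorphic \emph{and} every pair $\ell_i,\ell_j$ crosses at the same $x$-coordinate in both, then $f_i^t-f_j^t$ vanishes there for every $t$, so the whole family realizes one fixed allowable sequence. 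A short argument excludes that $f_i^t-f_j^t$ vanishes on a whole interval: that would force $f_i-f_j$ and $f_i'-f_j'$ to be constant of opposite signs on that interval, contradicting the crossings being matched. Finally, such a family is swept out by the plane homeomorphism $h_t(x,y)=(x,g_{t,x}(y))$, where $g_{t,x}$ is the orientation-preserving piecewise-linear self-homeomorphism of $\mathbb{R}$ sending the heights $f_i(x)$ to the heights $f_i^t(x)$; this is well defined because the same sign argument shows that the top-to-bottom order of the pseudo-lines over any vertical line is independent of $t$.

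With this principle the proof has three moves. \textbf{(1) Normalize the $x$-coordinates of the crossings.} Since $A$ and $B$ are $x$-isomorphic, their common allowable sequence determines the same number $r$ of distinct crossing-$x$-coordinates and, for each pair $\ell_i,\ell_j$, the same index $k(i,j)$ of the step at which the pair is reversed. A horizontal stretching, i.e., a reparametrization $x\mapsto\phi(x)$ of the $x$-axis by an increasing homeomorphism, preserves the $x$-isomorphism class and the approaching property, and moving $\phi$ along the segment $(1-t)\identity+t\phi$ realizes it through $x$-isomorphic approaching arrangements (swept out by $(x,y)\mapsto(\phi_t(x),y)$). Hence we may bring $A$ to an arrangement $A'$ whose $k$-th crossing-$x$-coordinate is $3k$, and likewise $B$ to $B'$; now every pair $\ell_i,\ell_j$ crosses at $x=3\,k(i,j)$ in both $A'$ and $B'$. \textbf{(2) Interpolate.} Apply the principle to $A'$ and $B'$: the family $f_i^t=(1-t)f_i^{A'}+tf_i^{B'}$ consists of approaching arrangements, all realizing the common allowable sequence, and connects $A'$ to $B'$ through plane homeomorphisms. \textbf{(3) Concatenate} the three pieces $A\rightsquigarrow A'\rightsquigarrow B'\rightsquigarrow B$.

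Equivalently, once the crossing $x$-coordinates are fixed the set of approaching arrangements realizing a prescribed allowable sequence is a convex, hence connected, subset of the space of $n$-tuples of curves; Lemma~\ref{lem_polygonal} together with Theorem~\ref{thm_realizability} gives a finite-dimensional incarnation of this through the polygonal model and its defining linear program, which one could use instead. If one prefers strictly approaching arrangements throughout, Lemma~\ref{lem_strictly} reduces the simple case to that setting, trivializing the ``vanishing on an interval'' bookkeeping. The part I expect to be delicate is not any single verification — each is essentially a one-line check that ``monotonically decreasing and surjective'' is preserved — but making the normalization of step~(1) precise for arrangements that are not $x$-simple: one must choose the stretching so that each \emph{bundle} of simultaneous crossings is sent to a single abscissa $3k$ without splitting it into several crossings (which would refine the allowable sequence) or merging it with a neighbouring bundle, and verify this is still a monotone reparametrization of the $x$-axis; for $x$-simple arrangements it is immediate.
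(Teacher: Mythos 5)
Your proof is correct, and it rests on the same two pillars as the paper's proof --- first make the crossing abscissae of the two arrangements coincide, then connect them by a convex interpolation --- but you implement the second step differently. The paper first replaces both arrangements by polygonal ones (Lemma~\ref{lem_polygonal}), aligns the helper-lines horizontally, and then observes that both polygonal arrangements are solutions of the same linear program from Theorem~\ref{thm_realizability}, so a convex combination of the variable vectors stays in the (convex) solution set; finite-dimensionality and the preservation of the allowable sequence come for free from the LP encoding. You instead interpolate the defining functions themselves, $f_i^t=(1-t)f_i+tf_i'$, which obliges you to check by hand what the LP gives automatically: that convex combinations of monotonically decreasing surjective differences are again such, that $f_i^t-f_j^t$ cannot vanish on an interval once the crossings of the pair are at the same abscissa (your constancy-of-opposite-signs argument is sound), and that the vertical order over every abscissa is $t$-independent so that the fiberwise map $h_t(x,y)=(x,g_{t,x}(y))$ is a well-defined homeomorphism. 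In exchange your argument is more self-contained (no polygonalization, no LP, and the homeomorphisms are explicit, whereas the paper leaves the passage from the LP path to a plane homeomorphism implicit). One remark: the delicacy you flag in step~(1) for non-$x$-simple arrangements is not actually there --- your stretching is a reparametrization $(x,y)\mapsto(\phi(x),y)$ depending on $x$ alone, so crossings sharing an abscissa keep sharing it and distinct abscissae stay distinct ($\phi$ is injective); bundles can neither split nor merge, exactly as in the $x$-simple case.
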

\begin{proof}
  Given an arrangement~`$A$ of approaching pseudo-lines, we construct
  a corresponding polygonal arrangement $A'$.  Linearly transforming a
  point $f_i(x)$ on a pseudo-line $\ell_i$ in $A$ to the point
  $f'_i(x)$ on the corresponding line $\ell'_i$ in $A'$ gives a
  homeomorphism from $A$ to $A'$ which can be extended to the plane.
  Given two $x$-isomorphic arrangements $A'$ and $B$ of polygonal
  approaching pseudo-lines, we may shift helper-lines horizontally, so
  that the $\binom{n}{2}+1$ helper-lines of the two arrangements
  become adjusted, i.e., are at the same $x$-coordinates; again there
  is a corresponding homeomorphism of the plane.  Now recall that
  these arrangements can be obtained from solutions of linear
  programs.  Since $A'$ and $B$ have the same combinatorial structure,
  their defining inequalities are the same.  Thus, a convex
  combination of the variables defining the two arrangements is also
  in the solution space, which continuously takes us from $A'$ to $B$
  and thus completes the proof.
\end{proof}

\begin{theorem}\label{thm_transform}
  Given two simple arrangements of approaching pseudo-lines, one can
  be transformed to the other by homeomorphisms of the plane and
  triangle flips s.t.\ all intermediate arrangement are approaching.
\end{theorem}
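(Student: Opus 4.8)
The plan is to funnel every arrangement to a common canonical one. By Lemma~\ref{lem_strictly} I may assume that $A$ and $B$ are \emph{strictly} approaching, and by Lemma~\ref{lem_transform_sweep_equivalent} I am free to replace either of them, at any stage, by an $x$-isomorphic strictly approaching arrangement through approaching arrangements -- in particular by one in suitably generic position. Hence it suffices to fix, for each $n$, one strictly approaching arrangement $R_n$ and to show that every strictly approaching simple arrangement on $n$ pseudo-lines can be carried to $R_n$ by homeomorphisms of the plane and triangle flips with all intermediate arrangements approaching; concatenating such a transformation for $A$ with the reverse of one for $B$ then proves the theorem.

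The engine of the argument is that, by Observation~\ref{obs:main}, one may continuously translate a single pseudo-line $\ell$ vertically and stay strictly approaching throughout. I claim that such a translation is a concatenation of homeomorphisms and triangle flips: between the finitely many parameter values at which the isomorphism type changes, the motion is a homeomorphism of the plane, and at each such value three pseudo-lines become concurrent, namely $\ell$ together with the two pseudo-lines whose crossing $v$ the curve $\ell$ is passing over. Provided $\ell$ meets these vertices one at a time and the region squeezed between $\ell$ and $v$ just before the passage is a cell (hence bounded by exactly three pseudo-lines), each such event is a triangle flip. Crucially, surjectivity of every difference $f_i-f_j$ guarantees that the translation never creates or destroys a crossing, so no other combinatorial change occurs (a reordering of two crossings in $x$ is just a homeomorphism).

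With this tool the canonicalization is a bubble sort. I translate $\ell_1$, the topmost pseudo-line at left infinity, upward; all of its crossings move to the right while the crossings among $\ell_2,\dots,\ell_n$ stay put, so after a sufficiently large finite shift $\ell_1$'s crossings are the last $n-1$ crossings, and since $\ell_2,\dots,\ell_n$ are in the fully reversed order $n,n-1,\dots,2$ once they have completed their mutual crossings, $\ell_1$ crosses them in the order $\{1,n\},\{1,n-1\},\dots,\{1,2\}$. Then I recurse on the strictly approaching $(n-1)$-arrangement induced by $\ell_2,\dots,\ell_n$: the translations this recursion prescribes, carried out in the full arrangement, leave $\ell_1$ untouched because $\ell_1$ runs above everything in the relevant range of $x$, while Observation~\ref{obs:main} keeps the whole arrangement approaching. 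Taking $R_n$ to be the arrangement produced by this recursion -- equivalently, any arrangement of $n$ lines sorted by slope, which is strictly approaching -- I conclude that every strictly approaching simple arrangement reaches $R_n$, which finishes the proof.

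The delicate point, and the step I expect to be the main obstacle, is the claim in the second paragraph. To see that the vertical translation really decomposes into \emph{genuine} triangle flips one must verify that the sliver between $\ell$ and the vertex it sweeps over is an empty triangular cell at the moment of the flip, and that the concurrence events are isolated (no two at once, and no four pseudo-lines through a common point). This is exactly where the preliminary move to general position -- kept inside the approaching class by Lemma~\ref{lem_transform_sweep_equivalent} -- is needed, and it is the part that has to be argued carefully rather than by appeal to the earlier lemmas.
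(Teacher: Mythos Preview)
Your strategy is sound and shares the paper's core engine: vertical translation of pseudo-lines keeps the arrangement approaching (Observation~\ref{obs:main}), and the resulting one-parameter family decomposes into homeomorphisms and triangle flips.  The paper, however, does not bubble-sort.  Instead it first replaces the portion of every pseudo-line to the \emph{left} of a vertical line $v$ (chosen left of all crossings) by rays having the slopes of a fixed line arrangement $A_0$ --- this is legal by Lemma~\ref{lem_transform_sweep_equivalent} since nothing left of $v$ affects the $x$-isomorphism class --- and then shifts \emph{all} pseudo-lines vertically until their intercepts on $v$ match those of $A_0$ on a corresponding vertical.  At that moment every crossing has migrated to the left of $v$, where the curves are already the straight lines of $A_0$.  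One more application of Lemma~\ref{lem_transform_sweep_equivalent} (replacing what remains to the right of $v$ by rays) finishes.  The advantage of this route is that it side-steps any recursion on sub-arrangements: all translations happen in one stroke, and the target is visibly $A_0$.

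Your bubble-sort works, but the recursive step is not quite as innocent as you state.  When you translate $\ell_2$ upward inside the full arrangement, the crossing $\ell_1\cap\ell_2$ drifts \emph{left} and each crossing $\ell_2\cap\ell_j$ drifts \emph{right}; unless $\ell_1$ was first pushed up by an amount large relative to the total displacement the recursion will later apply to $\ell_2,\dots,\ell_n$, these can collide with the fixed vertices $\ell_1\cap\ell_j$ and produce flips involving $\ell_1$, spoiling your claimed final order.  The fix is easy --- precompute the bounded shifts the recursion needs and push $\ell_1$ beyond their reach first --- but it should be said explicitly.  Also, the phrase ``equivalently, any arrangement of $n$ lines sorted by slope'' overstates things: your procedure lands on one specific isomorphism class (the ``staircase'' allowable sequence), not on an arbitrary line arrangement.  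Finally, on the genericity of the flips: you are right that this is the point needing care; the paper is equally brief about it, simply asserting that the continuous family ``can be described by homeomorphisms of the plane and triangle flips,'' so you are not missing any slicker argument here.
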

\begin{proof}
  Let $A_0$ be a fixed simple arrangement~$A_0$ of~$n$ lines.  We show
  that any approaching arrangement $A$ can be transformed into $A_0$
  with the given operations.  Since the operations are invertible this
  is enough to prove that we can get from $A$ to $B$.

  Consider a vertical line $v$ in $A$ such that all the crossings of
  $A$ are to the right of $v$ and replace the part of the pseudo-lines
  of $A$ left of $v$ by rays with the slopes of the corresponding
  lines of $A_0$. This replacement is covered by
  Lemma~\ref{lem_transform_sweep_equivalent}. Let $v_0$ be a vertical
  line in $A_0$ which has all the crossings of~$A_0$ to the left. Now
  we vertically shift the pseudo-lines of $A$ to make their
  intersections with $v$ an identical copy of their intersections with
  $v_0$.  During the shifting we have a continuous family of
  approaching arrangements which can be described by homeomorphisms of
  the plane and triangle flips. At the end the order of the
  intersections on $v$ is completely reversed, all the crossings are
  left of $v$ where the pseudo-lines are straight and use the slopes
  of $A_0$. It remains to replace the part of the pseudo-lines
  of $A$ to the right of $v$ by rays with the slopes of the corresponding
\end{proof}

Note that the proof requires the arrangement to be simple.
Vertical translations of pseudo-lines now allows us to prove a
restriction of our motivating question.

\begin{theorem}\label{thm_bichromatic}
  An arrangement of approaching red and blue pseudo-lines contains a
  triangular cell that is bounded by both a red and a blue pseudo-line
  unless it is a pencil, i.e., all the pseudo-lines cross in a single
  point.
\end{theorem}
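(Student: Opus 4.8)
The plan is to adapt the "translate the red arrangement vertically" argument from the introduction, which now works because, by Observation~\ref{obs:main}, sliding a subset of the pseudo-lines (the red ones) vertically keeps the whole family a valid arrangement of strictly approaching pseudo-lines. By Lemma~\ref{lem_strictly} we may first pass to an $x$-isomorphic strictly approaching arrangement, and by Lemma~\ref{lem_simple_sweep} we may refine to a simple allowable sequence, so we can assume the arrangement is simple and strictly approaching; neither operation creates or destroys monochromatic triangular cells of positive area in a way that matters, and a bichromatic triangle in the refined arrangement corresponds to one in the original (this correspondence is the kind of routine bookkeeping I would spell out carefully).

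**Next**, I would set up the motion: translate all red pseudo-lines by $+t$ in the $y$-direction, $t \geq 0$, keeping blue fixed. Red-red crossings and blue-blue crossings do not move relative to their own color class, so monochromatic triangular cells persist unchanged (in particular their areas are constant). As $t \to \infty$ every red pseudo-line eventually lies above every blue pseudo-line near any fixed crossing, so the combinatorial type must change; the first change as $t$ increases from $0$ happens at a triangle flip, i.e.\ a vertex of the arrangement passing through a pseudo-line. I would argue that at the first such event the flipping triangular cell must be bichromatic: if it were monochromatic it would be, say, an all-red triangle with a red vertex degenerating onto a red pseudo-line, but the red subarrangement moves rigidly, so that can never happen; similarly for all-blue. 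Hence the triangle involved has both colors, and just before the event it is a genuine bichromatic triangular cell of the translated arrangement, which for small $t$ is combinatorially identical to the original — giving a bichromatic triangular cell in the original approaching arrangement.

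**The main obstacle** I anticipate is making precise that "the combinatorics changes for large $t$" and "the first change is a triangle flip," and ruling out degenerate simultaneous events, in the setting of pseudo-lines rather than lines. With lines this is classical; here one has to invoke that the arrangement stays approaching and $x$-monotone throughout (Observation~\ref{obs:main}), argue $x$-monotonicity gives a well-defined allowable sequence at each $t$, and use a genericity/perturbation argument (in the spirit of Lemma~\ref{lem_simple_sweep}) to ensure the first combinatorial change is a single triangle flip rather than a more complicated degeneracy. The pencil exception enters exactly here: if the arrangement is a pencil there are no triangular cells at all, so no flip can occur — indeed for a pencil all red lines through the common point stay, after translation, a pencil on a vertical line, and one checks the combinatorial type genuinely does not change, so the argument correctly produces a triangle precisely when the arrangement is not a pencil. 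I would also need the easy direction that a non-pencil simple arrangement does have at least one triangular cell, which follows from standard facts about simple arrangements (every simple arrangement of at least three pseudo-lines that is not a pencil bounds a triangular cell), so that there is something for the motion to flip.
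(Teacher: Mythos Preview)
Your approach is essentially the paper's: translate all red pseudo-lines upward (valid by Observation~\ref{obs:main}) and catch the first combinatorial change, which must be a bichromatic triangle flip because monochromatic subarrangements move rigidly. Two small differences are worth noting. First, the paper pins down \emph{why} the combinatorics must change more concretely than your ``as $t\to\infty$ every red line lies above every blue line near any fixed crossing'': it observes that, unless the arrangement is a pencil, by symmetry in color and in vertical direction one may assume there is a blue--blue crossing lying above some red pseudo-line, and that red line will eventually sweep through that crossing. This also cleanly explains the pencil exception without needing the auxiliary fact about existence of triangular cells. Second, rather than preprocessing to a simple strictly approaching arrangement via Lemmas~\ref{lem_strictly} and~\ref{lem_simple_sweep}, the paper handles the degenerate situation (parallel bichromatic segments becoming concurrent at the first event) by translating for an infinitesimal time $\varepsilon>0$ first---at that moment all multiple crossings are monochromatic---and only then invoking the Lemma~\ref{lem_strictly} trick to remove parallel bichromatic segments before continuing. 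Your preprocessing route also works, but you should be explicit that refining the allowable sequence (Lemma~\ref{lem_simple_sweep}) only introduces new triangles by splitting higher-degree vertices, so a bichromatic triangle found after refinement still witnesses one in the original arrangement.
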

\begin{proof}
  By symmetry in color and direction we may assume that there is a
  crossing of two blue pseudo-lines above a red pseudo-line.
  Translate all the red pseudo-lines upwards with the same speed.
  Consider the first moment $t>0$ when the isomorphism class
  changes. This happens when a red pseudo-line moves over
  a blue crossing, or a red crossing is moved over a blue pseudo-line.
  In both cases the three pseudo-lines have determined
  a bichromatic triangular cell of the original arrangement.

  Now consider the case that at time $t$ parallel segments of 
  different color are concurrent. In this case we argue as follows.
  Consider the situation at time $\varepsilon>0$ right after the start
  of the motion. Now every multiple crossing is monochromatic
  and we can use an argument as in the proof of
  Lemma~\ref{lem_strictly} to get rid of parallel segments of
  different colors. Continuing the translation after the modification
  reveals a bichromatic triangle as before.
\end{proof}

\section{Levi's lemma for approaching arrangements}

Proofs for showing that well-known properties of line arrangements
generalize to pseudo-line arrangements often use Levi's enlargement
lemma.  (For example, Goodman and Pollack~\cite{helly_type} give
generalizations of Radon's theorem, Helly's theorem, etc.)  Levi's lemma
states that a pseudo-line arrangement can be augmented by a
pseudo-line through any pair of points.  In this section, we show
that we can add a pseudo-line while maintaining the property that all
pseudo-lines of the arrangement are approaching.

\begin{lemma}\label{lem_combination}
  Given an arrangement of approaching pseudo-lines containing two
  pseudo-lines $l_i$ and $l_{i+1}$ (each a function
  $\mathbb{R} \mapsto \mathbb{R}$), consider
  $l' = l'(x) = \lambda l_i(x) + (1-\lambda) l_{i+1}(x)$, for some
  $0 \leq \lambda \leq 1$.  The arrangement augmented by $l'$ is
  still an arrangement of approaching pseudo-lines.
\end{lemma}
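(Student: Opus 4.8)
The plan is to verify the two defining conditions of an approaching arrangement for the augmented family: first, that $l'$ crosses every other pseudo-line exactly once, and second, that for every pair among $\{l_1,\dots,l_n,l'\}$ the difference of the two function-graphs is monotonically decreasing and surjective. Since $l'$ is itself a function $\mathbb{R}\to\mathbb{R}$ and $x$-monotone by construction, it remains a valid pseudo-line curve; the work is entirely in checking the pairwise "approaching" condition. The key observation is that the approaching property is preserved under taking convex combinations: if $l_i$ and $l_{i+1}$ each approach some pseudo-line $l_k$, then so does any convex combination of them, because the relevant difference function for $l'$ and $l_k$ is itself a convex combination of the difference functions for $l_i,l_k$ and for $l_{i+1},l_k$.

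First I would handle a pseudo-line $l_k$ with $k\notin\{i,i+1\}$. Write $g(x) = l_k(x) - l'(x) = \lambda\bigl(l_k(x)-l_i(x)\bigr) + (1-\lambda)\bigl(l_k(x)-l_{i+1}(x)\bigr)$. Each of the two bracketed terms is monotone (decreasing or increasing, depending on whether $k < i$ or $k > i+1$ or $i < k < i+1$ — but the last case is vacuous since $i$ and $i+1$ are consecutive indices, so $l_k$ lies on the same side of both in the left-infinity order, hence both difference functions have the same monotonicity direction). A nonnegative combination of two functions that are both decreasing (resp. both increasing) is again decreasing (resp. increasing); and a sum where at least one summand is surjective and the other is monotone in the same direction is again surjective. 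Thus $g$ is monotone and surjective, so $l'$ and $l_k$ are approaching and in particular cross exactly once. A small point to address: when $\lambda\in\{0,1\}$ the curve $l'$ coincides with $l_i$ or $l_{i+1}$, so one should assume $0<\lambda<1$ for a genuinely new pseudo-line, or simply note that the degenerate cases are trivially fine (or handled by a tiny perturbation, invoking Observation \ref{obs:main}).

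The slightly more delicate step is the pair $l'$ with $l_i$ (and symmetrically with $l_{i+1}$). Here $l_i(x) - l'(x) = (1-\lambda)\bigl(l_i(x) - l_{i+1}(x)\bigr)$, which is a nonnegative scalar multiple of a strictly decreasing surjective function, hence itself decreasing; it is surjective provided $\lambda<1$ (and only weakly decreasing rather than strictly if we are in the non-strict setting — which matches the non-strict hypothesis, so this is fine). This also pins down where $l'$ sits in the left-infinity order: it lies strictly between $l_i$ and $l_{i+1}$, so the augmented arrangement can be re-indexed consistently. I expect the main obstacle — really the only subtlety — to be bookkeeping the monotonicity directions correctly (ensuring that for any third line $l_k$ the two difference functions $l_k-l_i$ and $l_k-l_{i+1}$ point the same way, which is exactly where consecutivity of the indices $i,i+1$ is used), together with being careful about strict versus non-strict monotonicity so that the conclusion lands in the (possibly non-strict) approaching class as stated; if strictness is desired one can perturb via Observation \ref{obs:main} or Lemma \ref{lem_strictly}.
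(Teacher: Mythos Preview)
Your proposal is correct and follows essentially the same approach as the paper: both arguments write $l_k-l'$ as the convex combination $\lambda(l_k-l_i)+(1-\lambda)(l_k-l_{i+1})$ and use that a nonnegative combination of two functions that are monotone in the same direction is again monotone in that direction, with consecutivity of $i,i+1$ guaranteeing the directions agree. You are in fact a bit more careful than the paper---you explicitly treat surjectivity, the pairs $(l',l_i)$ and $(l',l_{i+1})$, and the degenerate endpoints $\lambda\in\{0,1\}$---but the core idea is identical.
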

\begin{proof}
  Consider any pseudo-line $l_j$ of the arrangement, $j \leq i$.  We
  know that for $x_2 > x_1$,
  $l_i(x_1) - l_j(x_1) \geq l_i(x_2) - l_j(x_2)$, whence
  $\lambda l_j(x_1) - \lambda l_i(x_1) \geq \lambda l_j(x_2) - \lambda l_i(x_2)$.
Similarly, we have $(1-\lambda)l_j(x_1) - (1-\lambda) l_{i+1}(x_1)
\geq (1-\lambda) l_j(x_2) - (1-\lambda) l_{i+1}(x_2)$. 
Adding these two inequalities, we get
\[
l_j(x_1) - l'(x_1) \geq l_j(x_2) -l'(x_2) \enspace .
\]
The analogous holds for any $j \geq i+1$.
\end{proof}

The lemma gives us a means of producing a convex combination of two
approaching pseudo-lines with adjacent slopes.  Note that the adjacency
of the slopes was necessary in the above proof.

\begin{lemma}\label{lem_above}
  Given an arrangement of $n$ approaching pseudo-lines, we can add a
  pseudo-line
  $l_{n+1} = l_{n+1}(x) = l_n(x) + \delta (l_{n}(x) - l_{n-1}(x))$ for
  any $\delta > 0$ and still have an approaching arrangement.
\end{lemma}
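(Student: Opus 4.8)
The plan is to proceed exactly as in Lemma~\ref{lem_combination}, but with a signed/affine combination rather than a convex one. Observe that $l_{n+1}(x) = (1+\delta)l_n(x) - \delta\, l_{n-1}(x)$, which is an affine combination of $l_n$ and $l_{n-1}$ with coefficients summing to $1$, with the coefficient of $l_n$ positive and the coefficient of $l_{n-1}$ negative. The key point is that $l_{n-1}$ and $l_n$ have \emph{adjacent} indices (they are the two bottom-most pseudo-lines), so for any other pseudo-line $l_j$ with $j \le n-1$ we will combine the inequality ``$l_j - l_{n-1}$ is decreasing'' with ``$l_{n-1} - l_n$ is decreasing'' using nonnegative multipliers.

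First I would check that $l_{n+1}$ is approaching with respect to every $l_j$, $1 \le j \le n-1$. Fix $x_2 > x_1$. From the hypothesis that $l_j(x) - l_n(x)$ is decreasing we get $l_j(x_1) - l_n(x_1) \ge l_j(x_2) - l_n(x_2)$, and from $l_{n-1}(x) - l_n(x)$ being decreasing we get, after multiplying by $\delta > 0$, $\delta(l_{n-1}(x_1) - l_n(x_1)) \ge \delta(l_{n-1}(x_2) - l_n(x_2))$. Adding these and rearranging yields
\[
l_j(x_1) - \bigl((1+\delta)l_n(x_1) - \delta\, l_{n-1}(x_1)\bigr) \ge l_j(x_2) - \bigl((1+\delta)l_n(x_2) - \delta\, l_{n-1}(x_2)\bigr),
\]
which is exactly ``$l_j - l_{n+1}$ is monotonically decreasing''. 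The same computation shows that $l_j(x) - l_{n+1}(x)$ is in fact surjective onto $\mathbb{R}$, since it is a sum of a surjective decreasing function and a decreasing function, hence again surjective and decreasing; and it follows that $l_{n+1}$ crosses each $l_j$ exactly once, so that the augmented family is a valid arrangement with $l_{n+1}$ below all other pseudo-lines at left infinity (its slope at $-\infty$ is steeper downward than $l_n$'s), justifying the index $n+1$.

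The main obstacle — and the reason this needs its own lemma rather than being a corollary of Lemma~\ref{lem_combination} — is the sign of the coefficient $-\delta$ on $l_{n-1}$: we are taking an affine but not convex combination, so the naive ``multiply each inequality by its coefficient and add'' argument would require multiplying the inequality for $l_{n-1}$ by a negative number, reversing it. The fix, carried out above, is to not combine the $l_j$-vs-$l_{n-1}$ inequality at all, but instead combine the $l_j$-vs-$l_n$ inequality with the $l_{n-1}$-vs-$l_n$ inequality, both of which get nonnegative multipliers ($1$ and $\delta$ respectively); here adjacency of $l_{n-1}$ and $l_n$ is what makes ``$l_{n-1}-l_n$ decreasing'' available as a hypothesis. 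One should also note symmetrically (or by the left-right reflection $x \mapsto -x$ together with relabeling) that the analogous construction adds a pseudo-line \emph{below} via $l_n(x) - l_{n-1}(x)$-type terms; but as stated the lemma only asserts the ``above'' direction, so verifying the single family of inequalities for $j \le n-1$ suffices, together with the (immediate) observation that there is no $l_j$ with $j \ge n+1$ to check.
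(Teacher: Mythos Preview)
Your argument is correct and is essentially the paper's own proof: both write $l_j - l_{n+1}$ as the sum of the decreasing functions $l_j - l_n$ and $\delta(l_{n-1} - l_n)$ (the paper just separates out the case $j=n$ first, computing $l_n - l_{n+1} = \delta(l_{n-1}-l_n)$, and then adds $l_j - l_n$ on top). One small slip: you restrict to $1 \le j \le n-1$ and then say ``there is no $l_j$ with $j \ge n+1$ to check'', silently omitting $j=n$; but your displayed inequality handles $j=n$ verbatim (the first ingredient becomes $0 \ge 0$), so this is a one-line fix. Your remark on surjectivity is a welcome addition that the paper's proof leaves implicit.
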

\begin{proof}
Assuming $x_2 > x_1$ implies
\[
l_n(x_1) - l_{n+1}(x_1) = l_n(x_1) - l_n(x) - \delta (l_{n}(x_1) - l_{n-1}(x_1)) = \delta(l_{n-1}(x_1) - l_{n}(x_1))
\]\vskip-6mm
\[
\geq \delta(l_{n-1}(x_2) - l_{n}(x_2)) =  l_n(x_2) - l_{n+1}(x_2)\enspace .
\]
With $l_j(x_1) - l_{n}(x_1) \geq l_j(x_2) - l_{n}(x_2)$ we also
get $l_j(x_1) - l_{n+1}(x_1) \geq l_j(x_2) - l_{n+1}(x_2)$ for all
$1\leq j < n$.
\end{proof}

\begin{theorem}\label{thm_approaching_levi}
  Given an arrangement of strictly approaching pseudo-lines and two
  points $p$ and $q$ with different $x$-coordinates, the arrangement
  can be augmented by a pseudo-line $l'$ containing $p$ and $q$ to an
  arrangement of approaching pseudo-lines.  Further, if $p$ and $q$ do
  not have the same vertical distance to a pseudo-line of the initial
  arrangement, then the resulting arrangement is strictly approaching.
\end{theorem}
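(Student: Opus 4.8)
The plan is to reduce the general case to the two lemmas just proved by first producing, via the approaching version of Levi-type manipulations, a pseudo-line through $p$ that is ``almost'' the one we want, and then rotating/sliding it to also hit $q$ while staying in the approaching class. First I would assume without loss of generality that $p$ lies to the left of $q$, and that $p$ is on pseudo-line $\ell_k$ for some $k$ (if $p$ is not on any pseudo-line, insert one through $p$ first --- more on this below). The key observation is that Lemma~\ref{lem_combination} lets us form any convex combination $\lambda \ell_i + (1-\lambda)\ell_{i+1}$ of two slope-adjacent pseudo-lines, and Lemma~\ref{lem_above} (together with its mirror image, obtained by symmetry, giving a pseudo-line below $\ell_1$) lets us ``extrapolate'' past the top and bottom of the arrangement. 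Sweeping $\lambda$ from $0$ to $1$ across all adjacent pairs, and then extrapolating beyond, gives a continuously-parametrized family of approaching pseudo-lines whose value at the $x$-coordinate of $p$ ranges over all of $\mathbb{R}$; hence some member of this family passes through $p$. Call it $m$, and add it to the arrangement; by the lemmas the arrangement $A\cup\{m\}$ is approaching.

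Next I would fix the point $p$ and ``pivot'' $m$ about $p$ to sweep its value at the $x$-coordinate $x_q$ of $q$. Concretely, for a pseudo-line $m$ through $p$ I want a one-parameter family $m_t$, all passing through $p$, all approaching with respect to $A$, such that $m_t(x_q)$ is monotone and surjective onto $\mathbb{R}$. The natural candidate is to take convex combinations / extrapolations of $m$ with the pseudo-lines of $A\cup\{m\}$ that are slope-adjacent to $m$, but re-centered so that the combination is pinned at $p$: writing $m$ between $\ell_i$ and $\ell_{i+1}$ in the order at left infinity, set
\[
m_\lambda(x) = m(x) + \mu(\lambda)\bigl(\ell_{i}(x)-\ell_{i+1}(x)\bigr),
\]
and choose the scalar $\mu(\lambda)$ (allowed to be any real, positive or negative) precisely so that $m_\lambda(x_p)=p_y$. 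Since $\ell_i(x_p)-\ell_{i+1}(x_p)\neq 0$ (the two pseudo-lines are slope-adjacent, hence distinct at $x_p$ unless $x_p$ is their crossing, a measure-zero case one perturbs away), this determines $\mu(\lambda)$ uniquely and the map $t\mapsto m_t(x_q)$ moves monotonically; as we also slide $m$ across adjacent pairs and extrapolate beyond $\ell_1$ and $\ell_n$, the reachable set of values $m_\bullet(x_q)$ is again all of $\mathbb{R}$. Pick the member passing through $q$ as well; this is the desired $\ell'$.

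The main obstacle --- and the step I would spend the most care on --- is verifying that every pseudo-line produced in these pivoting families is genuinely \emph{approaching} with respect to all of $A$ simultaneously, not merely with respect to the two neighbors used to build it. Lemmas~\ref{lem_combination} and~\ref{lem_above} handle the ``convex combination'' and ``single extrapolation'' primitives, but a finite composition of extrapolations with varying scalars needs an argument that each intermediate object is still slope-adjacent (in the left-infinity order) to the two pseudo-lines being combined, so that the lemmas apply at each step; this is where the ordering bookkeeping lives, and one must check that inserting $m$ does not destroy the strictly-decreasing property of any difference $\ell_i-\ell_j$ already present. For the strictness addendum: if $p$ and $q$ do not have the same vertical distance to any $\ell_i$ of the original arrangement, then $\ell'-\ell_i$ takes different values at $x_p$ and $x_q$, so the (monotone) difference $\ell'-\ell_i$ is not constant on $[x_p,x_q]$; combined with the fact that our construction only ever adds \emph{strict} multiples of strictly-decreasing differences, one upgrades ``decreasing'' to ``strictly decreasing'' everywhere, and then Lemma~\ref{lem_strictly}-style reasoning is not even needed. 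I would close by noting the degenerate sub-cases (points not on a pseudo-line; $x_p$ or $x_q$ equal to a crossing $x$-coordinate) are handled by an initial perturbation or by a limiting argument using that the approaching condition is closed.
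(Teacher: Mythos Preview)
Your pivoting step contains a concrete error that makes the construction collapse. You set
\[
m_\lambda(x) \;=\; m(x) + \mu(\lambda)\bigl(\ell_i(x)-\ell_{i+1}(x)\bigr)
\]
and then choose $\mu(\lambda)$ so that $m_\lambda(x_p)=p_y$. But $m$ already passes through $p$, so $m(x_p)=p_y$, and the pinning condition becomes $\mu(\lambda)\bigl(\ell_i(x_p)-\ell_{i+1}(x_p)\bigr)=0$. You explicitly assume $\ell_i(x_p)\neq\ell_{i+1}(x_p)$, which forces $\mu(\lambda)=0$ for every $\lambda$: the entire ``family'' is just $m$ and nothing pivots. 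Even if one tries to repair this by letting $\mu$ be the free parameter, the resulting curve passes through $p$ only when $\ell_i(x_p)=\ell_{i+1}(x_p)$, the very case you excluded. So as written there is no one-parameter family of approaching pseudo-lines through $p$, and the second stage of your plan does not get off the ground. Your worry about whether arbitrary-sign multiples of $\ell_i-\ell_{i+1}$ stay approaching with respect to all of $A$ is also well founded: Lemmas~\ref{lem_combination} and~\ref{lem_above} do not cover that, and it is not true in general.

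The paper's proof sidesteps the whole two-stage construction with a single observation you are missing: vertically translate \emph{every} pseudo-line of the arrangement so that it passes through $p$. By Observation~\ref{obs:main} the translated arrangement is still strictly approaching, and it is now a pencil through $p$. Any convex combination $\lambda\ell_i+(1-\lambda)\ell_{i+1}$ of two (translated) neighbors automatically passes through $p$ as well, so one simply reads off where $q$ sits in this pencil and picks the combination (or the extrapolation from Lemma~\ref{lem_above}, if $q$ lies beyond the outermost pseudo-lines) whose value at $x_q$ equals $q_y$. Translating everything back yields $\ell'$ through $p$ and $q$. This needs only a single invocation of Lemma~\ref{lem_combination} or~\ref{lem_above}, with no bookkeeping about inserted curves or slope-adjacency after insertion. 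The strictness addendum is then immediate: the construction fails to be strictly approaching exactly when $\ell'$ is a vertical translate of some original $\ell_i$, which happens precisely when $p$ and $q$ have the same vertical distance to that $\ell_i$.
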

\begin{proof}
  Let $p$ have smaller $x$-coordinate than $q$.  Vertically translate
  all pseudo-lines such that they pass through $p$ (the pseudo-lines
  remain strictly approaching, forming a pencil through~$p$).  If
  there is a pseudo-line that also passes through $q$, we add a copy
  $l'$ of it. If $q$ is between $l_i$ and $l_{i+1}$, then we find some
  $0<\lambda<1$ such that
  $l'(x) = \lambda l_i(x) + (1-\lambda) l_{i+1}(x)$ contains $p$ and
  $q$. By Lemma~\ref{lem_combination} we can add $l'$ to the
  arrangement.  If $q$ is above or below all pseudo-lines in the
  arrangement, we can use Lemma~\ref{lem_above} to add a pseudo-line;
  we choose $\delta$ large enough such that the new pseudo-line
  contains $q$.  Finally translate all pseudo-lines back to their
  initial position. This yields an approaching extension of the
  original arrangement with a pseudo-line containing $p$ and
  $q$. Observe that the arrangement is strictly approaching unless the
  new pseudo-line was chosen as copy of $l'$.
\end{proof}

Following Goodman et al.~\cite{spread}, a \emph{spread of
  pseudo-lines} in the Euclidean plane is an infinite family of simple
curves such that
\begin{enumerate}
\item each curve is asymptotic to some line at both ends,
\item every two curves intersect at one point, at which they cross, and
\item there is a bijection $L$ from the unit circle $C$ to the family
  of curves such that $L(p)$ is a continuous function (under the
  Hausdorff metric) of $p \in C$.
\end{enumerate}

It is known that every projective arrangement of pseudolines
can be extended to a spread~\cite{spread} (see also
\cite{topological_plane}).
For Euclidean arrangements this is not true because condition 1 
may fail (for an example take the parabolas $(x-i)^2$ as pseudo-lines).
However, given an Euclidean arrangement $A$
we can choose two vertical lines $v_-$ and $v_+$ such that all the
crossings are between $v_-$ and $v_+$ and replace the extensions
beyond the vertical lines by appropriate rays. The reult of
this procedure is called the \emph{truncation} of $A$. Note that
the truncation of $A$ and $A$ are $x$-isomorphic and if $A$ is
approaching then so is the truncation.
We use
Lemma~\ref{lem_combination} to show the following.

\begin{theorem}
  The truncation of every approaching arrangement of pseudo-lines can
  be extended to a spread of pseudo-lines and a single vertical line
  such that the non-vertical pseudo-lines of that spread are
  approaching.
\end{theorem}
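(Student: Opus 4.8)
The plan is to start from the truncation $A$ of an approaching arrangement, whose pseudo-lines are already eventual rays outside the slab bounded by the vertical lines $v_-$ and $v_+$, and to ``fill in'' the circle of directions by adding new approaching pseudo-lines, one for each missing slope, together with one genuine vertical line. Think of the family of slopes of the $n$ pseudo-lines of $A$ (all distinct after possibly perturbing, which is harmless by Lemma~\ref{lem_strictly}): these give finitely many points on the upper half of the direction circle $C$, plus the vertical direction, plus the reflected points on the lower half. A spread needs a curve for \emph{every} point of $C$ depending continuously on it; since the curves in a spread come in antipodal-free fashion (each curve is asymptotic to a line at \emph{both} ends, so it occupies one direction, not an antipodal pair --- wait, a non-vertical line has the same slope at both ends, so it corresponds to two antipodal points of $C$), the bijection $L\colon C\to\text{curves}$ pairs antipodal points to the same curve except that the vertical direction is its own pair and is realised by the single vertical line. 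So what I must produce is: for every slope $s\in\mathbb{R}$ not already used, an approaching pseudo-line of ``slope $s$'' (meaning: asymptotic to a line of slope $s$ at both ends), consistent with the existing arrangement and with each other, varying continuously.

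First I would add the new pseudo-lines in slope order using Lemmas~\ref{lem_combination} and~\ref{lem_above}. For a target slope $s$ strictly between two consecutive used slopes $s_i$ (of $\ell_i$) and $s_{i+1}$ (of $\ell_{i+1}$), I can take a convex combination $\lambda\ell_i+(1-\lambda)\ell_{i+1}$; by Lemma~\ref{lem_combination} this keeps the arrangement approaching, and as $\lambda$ ranges over $(0,1)$ the resulting curve's asymptotic slopes sweep continuously and monotonically from near $s_{i+1}$ to near $s_i$ (since outside the slab the curves are literally the rays of slopes $s_i,s_{i+1}$), so I can hit exactly $s$; moreover distinct $\lambda$'s give curves that are themselves an approaching family, again by Lemma~\ref{lem_combination} applied repeatedly, because any two of them are convex combinations of the same pair and the ``decreasing difference'' inequalities are preserved under taking further convex combinations. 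For slopes above the top one $s_n$ (resp.\ below the bottom one $s_1$) I use Lemma~\ref{lem_above} (resp.\ its mirror image, which holds by the symmetric argument) with parameter $\delta>0$; letting $\delta\to\infty$ pushes the new slope monotonically up towards vertical, and different $\delta$'s again form an approaching family. Finally the vertical line itself: it can be added as a ``limit'' object; one checks directly that a vertical line and any non-vertical function-graph $f$ satisfy ``approaching'' vacuously in the relevant sense, since they cross exactly once --- here I should be slightly careful, since a vertical line is not a function-graph, so this is the one curve that sits outside the ``approaching'' framework and is exactly the extra vertical line allowed by the statement.

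The main obstacle is continuity and the topological bookkeeping of the bijection $L$, not the construction of individual curves. I need the assignment (point of $C$) $\mapsto$ (curve) to be continuous in the Hausdorff metric across the whole circle, including at the finitely many slopes already present in $A$ (where I must check that my convex-combination family and my Lemma~\ref{lem_above} family glue up continuously to the original pseudo-lines as $\lambda\to0,1$ or $\delta\to0$), and including at the vertical direction, where curves with slope $\to+\infty$ from one side and slope $\to-\infty$ from the other side must both converge (in the Hausdorff metric, on the truncated picture) to the single vertical line --- this is why a \emph{truncation} is used: beyond $v_\pm$ the curves are rays, and a ray of very steep positive slope and a ray of very steep negative slope both Hausdorff-converge to a vertical ray, so the two half-circles of steep slopes meet consistently at the vertical line. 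I would also verify properties (1) and (2) of a spread: (1) holds by construction since every added curve is piecewise of the form $\lambda\ell_i+(1-\lambda)\ell_{i+1}$ or the Lemma~\ref{lem_above} curve, each of which is eventually a ray; and (2), that every two curves of the enlarged family cross exactly once, is immediate from the approaching property established above (two strictly approaching function-graphs meet exactly once) together with the fact that the vertical line meets each function-graph exactly once. Assembling these, $L$ is the desired continuous bijection and the theorem follows.
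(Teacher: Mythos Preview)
Your overall strategy matches the paper's: fill in the slopes between consecutive pseudo-lines by convex combinations (Lemma~\ref{lem_combination}), and extend beyond the extremes. The approaching-ness of the resulting family is argued correctly. However, there is a genuine gap at the vertical direction. Your extrapolation via Lemma~\ref{lem_above}, namely $\ell_n + \delta(\ell_n - \ell_{n-1})$, produces curves that all pass through the crossing point of $\ell_{n-1}$ and $\ell_n$ (that is the unique point where $\ell_n-\ell_{n-1}=0$); as $\delta\to\infty$ they therefore converge to the vertical line through that crossing. The mirror extrapolation from the bottom, $\ell_1 + \delta(\ell_1 - \ell_2)$, yields curves that all pass through the crossing of $\ell_1$ and $\ell_2$ and converge to the vertical line \emph{there}. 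For $n\ge 3$ these two vertical lines are distinct, so your map $L$ has different one-sided limits at the vertical direction and cannot be continuous on~$C$. Your remark that ``a ray of very steep positive slope and a ray of very steep negative slope both Hausdorff-converge to a vertical ray'' does not help, because the base points of those rays (the values of the extrapolated curves at $v_-$ and $v_+$) themselves run off to $\pm\infty$ as $\delta\to\infty$.

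The paper's fix is small but essential: before interpolating, it adds two \emph{straight} near-vertical lines $\ell_0$ and $\ell_{n+1}$ whose slopes bound every secant slope of the truncated pseudo-lines (this keeps the arrangement approaching). Convex combinations then cover all directions between $\ell_0$ and $\ell_{n+1}$, and the remaining wedge of ultra-steep directions is filled by straight lines through the single point $p=\ell_0\cap\ell_{n+1}$. Since every curve in this last family passes through $p$, both one-sided limits at the vertical direction coincide with the vertical line through $p$, and continuity of $L$ is secured.
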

\begin{proof}
  Let $l_1, \dots, l_n$ be the pseudo-lines of the truncation of an
  approaching arrangement. Add two almost vertical straight lines
  $l_0$ and $l_{n+1}$ such that the slope of the line connecting two
  points on a pseudoline $l_i$ is between the slopes of $l_0$ and
  $l_{n+1}$.  The arrangement with pseudo-lines
  $l_0,l_1, \dots, l_n,l_{n+1}$ is still approaching. Initialize $S$
  with these $n+2$ pseudolines.  For each $0\leq i \leq n$ and each
  $\lambda \in (0,1)$ add the pseudo-line
  $\lambda l_i(x) + (1-\lambda) l_{i+1}(x)$ to $S$. The proof of
  Lemma~\ref{lem_combination} implies that any two pseudo-lines in $S$
  are approaching. Finally, let~$p$ be the intersection point of
  $l_0$ and $l_{n+1}$ and add all the lines containing $p$ and some
  point above these two lines to $S$. This completes the construction
  of the spread $S$.
\end{proof}

\section{Approaching generalized configurations}

Levi's lemma is the workhorse in the proofs of many properties of
pseudo-line arrangements.  Among these, there is the so-called
\emph{double dualization} by Goodman and Pollack~\cite{semispaces}
that creates, for any arrangement of pseudo-lines, a corresponding
primal generalized configuration of points.

A \emph{generalized configuration of points} is an arrangement of
pseudo-lines with a specified set of $n$ vertices, called
\emph{points}, such that any pseudo-line passes through two points,
and, at each point, $n-1$ pseudo-lines cross.  We assume for
simplicity that there are no other vertices in which more than two
pseudo-lines of the arrangement cross.

Let $\mathcal{C} = (\mathcal{A}, P)$ be a generalized configuration of
points consisting of an approaching arrangement $\mathcal{A}$, and a
set of points $P = \{p_1, \dots, p_n\}$, which are labeled by
increasing $x$-coordinate.  We denote the pseudo-line of $\mathcal{A}$
connecting points $p_i, p_j \in P$ by $p_{ij}$.

Consider a point moving from top to bottom at left infinity.  This
point traverses all the pseudo-lines of $\mathcal{A}$ in some order.
We claim that if we start at the top with the identity permutation
$\pi = (1, \dots, n)$, then, when passing $p_{ij}$ we can apply
the (adjacent) transposition $(i,j)$ to $\pi$.
Moreover, by recording all the
permutations generated during the move of the point we obtain an
allowable sequence $\Pi_{\mathcal{C}}$.

Consider the complete graph $K_P$ on the set $P$. Let $c$ be an
unbounded cell of the arrangement $\mathcal{A}$, when choosing $c$ as the
north-face of $\mathcal{A}$ we get a left to right orientation on each
$p_{ij}$. Let this induce the orientation of the edge $\{i,j\}$ of
$K_P$. These orientations constitute a tournament on $P$. It is easy
to verify that this tournament is acyclic, i.e., it induces a
permutation $\pi_c$ on $P$.

\begin{itemize}
\item The order $\pi$ corresponding to the top cell equals
  the left-to-right order on $P$. Since we have labeled the points by
  increasing $x$-coordinate this is the identity.
\item When traversing $p_{ij}$ to get from a cell $c$ to an adjacent
  cell $c'$ the two orientations of the complete graph only differ in
  the orientation of the edge $\{i,j\}$.  Hence, $\pi_c$ and $\pi_c$
  are related by the adjacent transposition $(i,j)$.
\end{itemize}

The allowable sequence $\Pi_{\mathcal{C}}$ and the allowable sequence
of $\mathcal{A}$ are different objects, they differ even in the length
of the permutations.

We say that an arrangement of pseudo-lines is \emph{dual} to a
(\emph{primal}) generalized configuration of points if they have the
same allowable sequence.  Goodman and Pollack~\cite{semispaces} showed
that for every pseudo-line arrangement there is a primal generalized
configuration of points, and vice versa.  We prove the same for the
sub-class of approaching arrangements.

\begin{lemma}\label{lem_dualize}
  For every generalized configuration $\mathcal{C} = (\mathcal{A}, P)$
  of points on an approaching arrangement $\mathcal{A}$, there is an
  approaching arrangement ${A}^*$ with allowable
  sequence~$\Pi_{\mathcal{C}}$.
\end{lemma}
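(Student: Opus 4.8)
We have to produce an approaching arrangement $A^*$ with allowable sequence $\Pi_{\mathcal C}$. First I would pin down $\Pi_{\mathcal C}$ concretely. Fix a vertical line $v$ to the left of all vertices of $\mathcal A$, and for $i<j$ let $h_{ij}$ be the height at which $p_{ij}$ meets $v$; since two distinct pseudo-lines of $\mathcal A$ cross exactly once and at a finite point, the $h_{ij}$ are pairwise distinct. A point descending at left infinity meets the pseudo-lines of $\mathcal A$ from top to bottom, i.e.\ in decreasing order of the $h_{ij}$, so $\Pi_{\mathcal C}$ is the simple allowable sequence obtained from the identity by applying the transpositions $(i,j)$ in decreasing order of $h_{ij}$. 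That some pseudo-line arrangement realizes this sequence is classical; what we need is an \emph{approaching} one. Since $\Pi_{\mathcal C}$ is in particular a suballowable sequence, Theorem~\ref{thm_realizability} reduces this to feasibility of a linear program, so the plan is to prove that program feasible by exhibiting a solution. (Alternatively one could attempt the Goodman--Pollack double dualization carried out with the approaching enlargement lemma Theorem~\ref{thm_approaching_levi} in place of the usual Levi lemma, but I would pursue the more self-contained linear-programming route.)

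The key combinatorial input is a betweenness property of the heights: for $i<j<k$ --- so that $p_i,p_j,p_k$ occur left to right in this order --- the value $h_{ik}$ lies strictly between $h_{ij}$ and $h_{jk}$. Indeed $p_{ij},p_{jk},p_{ik}$ cross pairwise exactly at $p_j,p_k,p_i$, forming a triangle whose leftmost vertex is $p_i$ and whose rightmost vertex is $p_k$. Just to the left of $p_i$ the two pseudo-lines through $p_i$, namely $p_{ij}$ and $p_{ik}$, are vertically consecutive (anything strictly between them there would be forced through $p_i$), so far to the left $p_{jk}$ is not the middle of the three; symmetrically, far to the right $p_{ij}$ is not the middle; and since between the leftmost and rightmost vertex every pair of the three crosses once, the far-left and far-right orders are reverses of one another and hence have the same middle element. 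Therefore $p_{ik}$ is the middle one far to the left, which on $v$ says exactly that $h_{ik}$ is between $h_{ij}$ and $h_{jk}$. (This is the geometric shadow of the already-noted fact that $\Pi_{\mathcal C}$ is a genuine allowable sequence.)

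For the solution I would build the dual pseudo-lines $p_1^*,\dots,p_n^*$ one at a time, keeping after $k$ steps an approaching arrangement whose crossings, left to right, appear in decreasing order of their labels $h_{ab}$, $1\le a<b\le k$. The step is light because the new pseudo-line always receives the largest index, so the only fresh approaching requirement is that $p_i^*-p_{k+1}^*$ be monotone decreasing and surjective for $i\le k$; writing $p_{k+1}^*:=p_k^*-g$ turns this into $p_i^*-p_{k+1}^*=(p_i^*-p_k^*)+g$, and since $p_i^*-p_k^*$ is already monotone decreasing and surjective for $i<k$ (and $\equiv 0$ for $i=k$), this holds for \emph{every} monotone decreasing surjective $g$. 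Thus $g$ is entirely free, and we only have to make its crossings with the curves $p_k^*-p_i^*$ ($i\le k$) fall into the slabs prescribed by where $h_{i,k+1}$ lies among the labels of the crossings already placed; the betweenness property is precisely what makes these prescriptions mutually compatible, so that a single monotone decreasing $g$ --- hence an honest pseudo-line --- achieving all of them exists. I expect this compatibility to be the real work: to prove it cleanly one carries along in the induction not just the combinatorial type but enough control of the actual abscissae and of the values $p_k^*-p_i^*$ there, and then, for each triple $\{i,i',k+1\}$, reads off from betweenness the sign comparison needed for the required crossing points to admit a monotone interpolant --- equivalently, one eliminates the variables of the linear program of Theorem~\ref{thm_realizability} slab by slab and checks that every surviving inequality follows from betweenness.
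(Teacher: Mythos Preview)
Your plan has a genuine gap: you discard the hypothesis that $\mathcal{A}$ is approaching. After fixing the vertical line $v$, everything you use is the order of the heights $h_{ij}$ together with the betweenness property ``$h_{ik}$ lies between $h_{ij}$ and $h_{jk}$ for $i<j<k$'', and your proof of betweenness only needs that $\mathcal{A}$ is a pseudo-line arrangement, not that it is approaching. But betweenness is exactly the condition that the induced sequence of transpositions is a valid allowable sequence; it holds for \emph{every} generalized configuration. So if your inductive construction went through from betweenness alone, you would have proved that every allowable sequence on $[n]$ is realizable by an approaching arrangement. Combined with the Goodman--Pollack correspondence, this would make every pseudo-line arrangement $x$-isomorphic to an approaching one, contradicting the results of Section~\ref{sec_properties} (there are arrangements with no isomorphic approaching realization, hence allowable sequences whose LP from Theorem~\ref{thm_realizability} is infeasible). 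Your final sentence, ``every surviving inequality follows from betweenness'', is therefore false, and the compatibility step you flagged as ``the real work'' cannot be completed on that input.

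The paper's proof avoids this by using the geometry of $\mathcal{A}$, not just the order at $v$. Write $\Pi_{\mathcal C}=\pi_0,\ldots,\pi_h$ and let $(i,j)$ be the transposition at step $g$. Define the dual pseudo-line $\ell_k$ at abscissa $x=g$ by $y_g(k):=p_{ij}(x_k)$, i.e.\ evaluate the primal pseudo-line $p_{ij}$ at the $x$-coordinate of the point $p_k$. For $g<g'$ with transpositions $(i,j)$ and $(s,t)$, the pseudo-line $p_{ij}$ is above $p_{st}$ at left infinity, so the approaching property of $\mathcal{A}$ gives $p_{ij}(x_a)-p_{st}(x_a)\ge p_{ij}(x_b)-p_{st}(x_b)$ for $x_a<x_b$, which rearranges to $y_g(a)-y_g(b)\ge y_{g'}(a)-y_{g'}(b)$. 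This is precisely the approaching inequality for $A^*$, and it also shows the crossings occur in the right order. The whole argument is a two-line computation once you choose to sample the primal pseudo-lines at the $x$-coordinates of the points; this is the idea your reduction to heights on a single vertical line loses.
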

\begin{proof}
  Let $\Pi_{\mathcal{C}} = \pi_0,\pi_1,\ldots,\pi_h$.  We call $(i,j)$
  the adjacent \emph{transposition at $g$} when
  $\pi_g=(i,j)\circ\pi_{g-1}$.  To produce a polygonal approaching
  arrangement~${A}^*$ we define the $y$-coordinates of the
  pseudo-lines $\ell_1,\ldots,\ell_n$ at $x$-coordinates $i\in[h]$.
  Let $(i,j)$ be the transposition at $g$.  Consider the pseudo-line
  $p_{ij}$ of $\mathcal{C}$.  Since $p_{ij}$ is $x$-monotone we can
  evaluate $p_{ij}(x)$.  The $y$-coordinate of the pseudo-line
  $\ell_k$ dual to the point $p_k=(x_k,y_k)$ at $x=g$ is obtained as
  $y_{g}(k) = p_{ij}(x_k)$.

  We argue that the resulting pseudo-line arrangement is approaching.
  Let $(i,j)$ and $(s,t)$ be transpositions at $g$ and $g'$,
  respectively, and assume $g < g'$.  We have to show that
  $y_{g}(a) - y_{g}(b) \geq y_{g'}(a) - y_{g'}(b)$, for all
  $1\leq a < b \leq n$.  From $a < b$ it follows that $p_a$ is left of
  $p_b$, i.e., $x_a < x_b$.  $p_{ij}$ and $p_{st}$ are approaching, we
  get $p_{ij}(x_a) - p_{st}(x_a) \geq p_{ij}(x_b) - p_{st}(x_b)$,
  i.e., $p_{ij}(x_a) - p_{ij}(x_b) \geq p_{st}(x_a) - p_{st}(x_b)$,
  which translates to
  $y_{g}(a) - y_{g}(b) \geq y_{g'}(a) - y_{g'}(b)$.  This completes
  the proof.
\end{proof}

Goodman and Pollack use the so-called \emph{double dualization} to
show how to obtain a primal generalized configuration of points for a
given arrangement ${A}$ of pseudo-lines.  In this process,
they add a pseudo-line through each pair of crossings in
${A}$, using Levi's enlargement lemma.  This results in a
generalized configuration $\mathcal{C}'$ of points, where the points
are the crossings of ${A}$.  From this, they produce the dual
pseudo-line arrangement $\mathcal{A}'$.  Then, they repeat the
previous process for $\mathcal{A}'$ (that is, adding a line through
all pairs of crossings of~$\mathcal{A}'$).  The result is a
generalized configuration $\mathcal{C}$ of points, which they show
being the primal generalized configuration of~$\mathcal{A}$.  With
Theorem~\ref{thm_approaching_levi} and Lemma~\ref{lem_dualize}, we
know that both the augmentation through pairs of crossings and the
dualization process can be done such that we again have approaching
arrangements, yielding the following result.

\begin{lemma}\label{lem_primalize}
  For every arrangement of approaching pseudo-lines, there is a primal
  generalized configuration of points whose arrangement is also
  approaching.
\end{lemma}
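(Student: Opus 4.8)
The plan is to mimic Goodman and Pollack's double-dualization argument, but to carry it out entirely within the class of approaching arrangements, using the approaching-preserving versions of the two ingredients that we have already established. Concretely, let $A$ be a given arrangement of approaching pseudo-lines; I want to produce a generalized configuration $\mathcal{C} = (\mathcal{B}, P)$ with $\mathcal{B}$ approaching and with the allowable sequence of $\mathcal{B}$'s primal configuration equal to the allowable sequence of $A$. (By Lemma~\ref{lem_strictly} we may assume $A$ is strictly approaching, and by Lemma~\ref{lem_simple_sweep} we may pass to a refinement, so there is no loss in assuming $A$ is simple and $x$-simple when convenient.)

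First I would run the first half of the double dualization: using Theorem~\ref{thm_approaching_levi}, add a pseudo-line through every pair of crossings of $A$, keeping the arrangement approaching at each step (the hypothesis of Theorem~\ref{thm_approaching_levi} only needs the two chosen points to have distinct $x$-coordinates, which holds after passing to an $x$-simple refinement; if a degenerate coincidence of vertical distances arises one perturbs as in Lemma~\ref{lem_simple_sweep}). This yields an approaching arrangement together with a distinguished point set — the crossings of $A$ — through every pair of which a pseudo-line of the arrangement passes; in other words, an approaching generalized configuration $\mathcal{C}'$ whose point set is the vertex set of $A$. Next I would apply Lemma~\ref{lem_dualize} to $\mathcal{C}'$ to obtain an approaching arrangement $\mathcal{A}'$ realizing the allowable sequence $\Pi_{\mathcal{C}'}$. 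Then I repeat the first step on $\mathcal{A}'$: add a pseudo-line through every pair of its crossings, again via Theorem~\ref{thm_approaching_levi}, to get a second approaching generalized configuration $\mathcal{C} = (\mathcal{B}, P)$.

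The content of the argument then reduces to the classical fact (from~\cite{semispaces}) that this $\mathcal{C}$ is a primal generalized configuration of the arrangement $A$ we started with — that is, that the allowable sequence of the primal configuration associated to $\mathcal{C}$ equals the allowable sequence of $A$. Since the two dualization operations we performed are exactly the ones Goodman and Pollack use, and since these operations are defined purely combinatorially (they depend only on the cell structure / allowable sequence, not on the geometric realization), the combinatorial identity they prove carries over verbatim; our only addition is that, by Theorem~\ref{thm_approaching_levi} and Lemma~\ref{lem_dualize}, each intermediate object can be taken to be approaching. So the proof is: invoke the approaching Levi lemma to build $\mathcal{C}'$, invoke Lemma~\ref{lem_dualize} to get $\mathcal{A}'$, invoke the approaching Levi lemma again to get $\mathcal{C}$, and cite~\cite{semispaces} for the fact that $\mathcal{C}$ primalizes $A$.

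The main obstacle I anticipate is not the construction but the bookkeeping around degeneracies: Theorem~\ref{thm_approaching_levi} requires the two points to have different $x$-coordinates and, for \emph{strict} approaching-ness, requires that they not be at equal vertical distance from an existing pseudo-line, whereas the crossings of $A$ (and later of $\mathcal{A}'$) need not be in such general position. The remedy is to interleave small vertical translations (Observation~\ref{obs:main}) and the refinement step of Lemma~\ref{lem_simple_sweep} between the additions, so that at every stage the next pair of points to be connected is in admissible position; one must check that these perturbations do not change the relevant allowable sequence, which follows because the translations are chosen small enough to avoid creating or destroying any crossing incidence among the finitely many pseudo-lines present. A secondary, purely expository, obstacle is that strictly speaking the primalization identity is proved in~\cite{semispaces} for unrestricted pseudo-line arrangements; since our arrangements are in particular pseudo-line arrangements, the identity applies without modification, and it is enough to remark this.
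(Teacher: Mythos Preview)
Your proposal is correct and follows essentially the same route as the paper: run Goodman--Pollack's double dualization, replacing each use of Levi's lemma by Theorem~\ref{thm_approaching_levi} and the dualization step by Lemma~\ref{lem_dualize}, then cite~\cite{semispaces} for the combinatorial identity. The paper's argument is terser and does not spell out the degeneracy bookkeeping you describe, but the content is the same.
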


Combining Lemmas~\ref{lem_dualize} and~\ref{lem_primalize}, we obtain
the main result of this section.

\begin{theorem}
  An allowable sequence is the allowable sequence of an approaching
  generalized configuration of points if and only if it is the
  allowable sequence of an approaching arrangement.
\end{theorem}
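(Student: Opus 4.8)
The plan is to derive the theorem directly from the two lemmas just proved, since together they establish both implications.

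For the forward direction, suppose $\Pi$ is the allowable sequence of an approaching generalized configuration of points $\mathcal{C} = (\mathcal{A}, P)$, meaning $\Pi = \Pi_{\mathcal{C}}$. Then Lemma~\ref{lem_dualize} immediately produces an approaching arrangement $A^*$ whose allowable sequence is exactly $\Pi_{\mathcal{C}} = \Pi$. So $\Pi$ is the allowable sequence of an approaching arrangement, as required. This direction is essentially a restatement of Lemma~\ref{lem_dualize}.

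For the reverse direction, suppose $\Pi$ is the allowable sequence of an approaching arrangement $A$. By Lemma~\ref{lem_primalize}, there is a primal generalized configuration of points $\mathcal{C} = (\mathcal{A}, P)$ whose underlying arrangement $\mathcal{A}$ is approaching and which has the same allowable sequence as $A$; that is, $\Pi_{\mathcal{C}} = \Pi$ (here I am using that ``primal generalized configuration of points'' was defined via having the same allowable sequence). Hence $\Pi$ is the allowable sequence of an approaching generalized configuration of points. So I would write the proof as: ``The forward implication is Lemma~\ref{lem_dualize}, the backward implication is Lemma~\ref{lem_primalize}.''

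The only subtlety I expect — and the one point I would take care to spell out — is to make sure the notion of ``allowable sequence of a generalized configuration of points'' used in the statement coincides with $\Pi_{\mathcal{C}}$ as constructed before Lemma~\ref{lem_dualize}, and that ``primal'' in Lemma~\ref{lem_primalize} is exactly the ``same allowable sequence'' notion spelled out just before it. Given the definitions in the excerpt this is immediate, so there is no real obstacle; the work was all done in establishing Lemmas~\ref{lem_dualize} and~\ref{lem_primalize}, and this theorem is just the bookkeeping that packages them into an ``if and only if.''
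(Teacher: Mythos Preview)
Your proposal is correct and matches the paper's approach exactly: the paper simply states that the theorem follows by combining Lemma~\ref{lem_dualize} (forward direction) and Lemma~\ref{lem_primalize} (backward direction), which is precisely what you do.
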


\section{Realizability and counting}\label{sec_properties}

Considering the freedom one has in constructing approaching
arrangements, one may wonder whether actually all pseudo-line
arrangements are $x$-isomorphic to approaching arrangements.  
As we will see in this section, this is not the case.  We use the
following lemma, that can easily be shown using the construction for
Lemma~\ref{lem_polygonal}.

\begin{lemma}\label{lem_three_like_lines}
  Given a simple suballowable sequence of permutations
  $(\identity, \pi_1, \pi_2)$, where $\identity$ is the identity
  permutation, the suballowable sequence is realizable with an
  arrangement of approaching pseudo-lines if and only if it is
  realizable as a line arrangement.
\end{lemma}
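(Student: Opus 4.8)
The plan is to prove the two implications separately, exploiting that the sequence consists of only three permutations and that the first of them is the identity. For the easy direction, \emph{line-realizable $\Rightarrow$ approaching-realizable}, I would observe that a generic line arrangement is already a special case of an approaching arrangement. Writing the lines as $f_i(x)=a_ix+b_i$, the fact that the leftmost permutation is $\identity$ forces $a_1<\dots<a_n$; hence for $i<j$ the difference $f_i(x)-f_j(x)=(a_i-a_j)x+(b_i-b_j)$ has negative slope and is thus strictly decreasing and surjective. So any line arrangement realizing the sequence is already an arrangement of strictly approaching pseudo-lines.

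For the substantive direction, \emph{approaching-realizable $\Rightarrow$ line-realizable}, the key first step is to notice that only two of the three columns are \emph{active}. Since $\identity$ has no inversions, a sweep line can report $\identity$ only while it lies to the left of every crossing; thus it suffices to control the two $x$-coordinates $x_1<x_2$ at which a sweep of the given approaching arrangement reports $\pi_1$ and $\pi_2$. By Lemma~\ref{lem_strictly} I may assume the arrangement is strictly approaching. Recording $v_i:=f_i(x_1)$ and $w_i:=f_i(x_2)$, the strict approaching property gives, for all $i<j$, that $f_i(x_1)-f_j(x_1)>f_i(x_2)-f_j(x_2)$, which rearranges to say that $w_i-v_i$ is strictly increasing in~$i$.

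Then I would invoke the two-column instance of the polygonal construction from Lemma~\ref{lem_polygonal}: place helper-lines only at $x_1$ and $x_2$ and join the two sample points of each pseudo-line by a single segment, which here is nothing but the straight line $L_i$ through $(x_1,v_i)$ and $(x_2,w_i)$. Its slope is $(w_i-v_i)/(x_2-x_1)$, and by the previous step these slopes are strictly increasing, so $\{L_i\}$ is a genuine line arrangement whose order at left infinity is $\identity$, whose order at $x_1$ is $\pi_1$, and whose order at $x_2$ is $\pi_2$. A sweep from left to right therefore meets $\identity$, then $\pi_1$, then $\pi_2$, so $\{L_i\}$ realizes the given suballowable sequence. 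A small perturbation of the intercepts, which does not affect the strict orders at $x_1$ and $x_2$, removes any accidental triple points or coincident crossing abscissae and makes the arrangement simple, as permitted since the sequence is simple.

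The main obstacle---really the crux of the whole argument---is this collapse from three columns to two. Once one observes that the leading identity forces the first column to sit left of all crossings, the three-permutation realizability condition becomes a condition on just the two columns carrying $\pi_1$ and $\pi_2$, and there a polygonal approaching pseudo-line between two consecutive helper-lines is forced to be a single straight segment. Hence ``approaching between the two columns'' and ``straight between the two columns'' impose literally the same inequality, namely that $w_i-v_i$ be increasing, which is exactly what equates the two notions of realizability. I expect no difficulty beyond this observation and the routine genericity clean-up; note in particular that the argument never needs $\pi_1$ and $\pi_2$ to be close or the underlying number of pseudo-lines to be small, so it applies verbatim for arbitrary~$n$.
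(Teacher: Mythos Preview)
Your proposal is correct and follows essentially the same approach as the paper: pick vertical lines at the $x$-coordinates where $\pi_1$ and $\pi_2$ are realized, replace each pseudo-line by the straight line through its two sampled points, and use the approaching inequality to conclude that the slopes are strictly increasing in~$i$, whence the leftmost permutation is $\identity$. The paper omits the easy direction (lines are trivially approaching) and the final genericity perturbation, but your additions there are routine and do not constitute a different method.
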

\begin{proof}
  Consider any realization~$A$ of the simple suballowable sequence
  with an arrangement of approaching pseudo-lines.  Since the
  arrangement is simple, we can consider the pseudo-lines as being
  strictly approaching, due to Lemma~\ref{lem_strictly}.  There exist
  two vertical lines $v_1$ and $v_2$ s.t.\ the order of intersections
  of the pseudo-lines with them corresponds to $\pi_1$ and $\pi_2$,
  respectively.  We claim that replacing pseudo-line $p_i\in A$ by the
  line $\ell_i$ connecting the points $(v_1,p_i(v_1))$ and
  $(v_2,p_i(v_2))$ we obtain a line arrangement representing the
  suballowable sequence $(\identity, \pi_1, \pi_2)$.

  To prove the claim we verify that for $i < j$ the slope of $\ell_i$
  is less than the slope of $\ell_j$.  Since $A$ is approaching we
  have $p_i(v_1) - p_j(v_1) \geq p_i(v_2) - p_j(v_2)$, i.e.,
  $p_i(v_1) - p_i(v_2) \geq p_j(v_1) - p_j(v_2)$.  The slopes of
  $\ell_i$ and $\ell_j$ are obtained by dividing both sides of this
  inequality by $v_1-v_2$, which is negative.
\end{proof}

Asinowski~\cite{sub_allowable} identified a suballowable sequence
$(\identity, \pi_1, \pi_2)$, with permutations of six elements which is
not realizable with an arrangement of lines.

\begin{cor}\label{cor_no_suballowable}
  There exist simple suballowable sequences that are not realizable by
  arrangements of approaching pseudo-lines.
\end{cor}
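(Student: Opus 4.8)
The plan is to combine the two ingredients that the excerpt has just assembled. By Lemma~\ref{lem_three_like_lines}, a simple suballowable sequence of the form $(\identity, \pi_1, \pi_2)$ is realizable by an arrangement of approaching pseudo-lines if and only if it is realizable by an arrangement of straight lines. By the cited result of Asinowski~\cite{sub_allowable}, there is a concrete simple suballowable sequence $(\identity, \pi_1, \pi_2)$ on six elements that is \emph{not} realizable as a line arrangement. Putting these together immediately yields that this same suballowable sequence is not realizable by an arrangement of approaching pseudo-lines, which is exactly the assertion of the corollary.

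Concretely, I would write: let $(\identity, \pi_1, \pi_2)$ be Asinowski's non-line-realizable simple suballowable sequence of permutations of six elements. Suppose for contradiction that it is realizable by some arrangement $A$ of approaching pseudo-lines. Since the sequence is simple, Lemma~\ref{lem_three_like_lines} applies and gives a line arrangement realizing $(\identity, \pi_1, \pi_2)$, contradicting Asinowski's result. Hence no such $A$ exists, and the corollary follows.

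There is essentially no obstacle here; the corollary is a direct logical consequence of Lemma~\ref{lem_three_like_lines} and an externally cited fact. The only point requiring a word of care is to make sure the suballowable sequence in question is indeed \emph{simple}, so that the hypothesis of Lemma~\ref{lem_three_like_lines} is met — but this is precisely the form in which Asinowski states his example (a simple suballowable sequence $(\identity,\pi_1,\pi_2)$ on six elements). So the proof is a two-sentence deduction. If one wanted to add value beyond the bare corollary, one could remark that this also shows that deciding realizability of a suballowable sequence by an approaching arrangement is genuinely restrictive compared to general pseudo-line arrangements, but that is commentary rather than part of the proof.

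\begin{proof}
  Let $(\identity, \pi_1, \pi_2)$ be the simple suballowable sequence
  of permutations of six elements identified by
  Asinowski~\cite{sub_allowable} which is not realizable as an
  arrangement of lines.  If it were realizable by an arrangement of
  approaching pseudo-lines, then by
  Lemma~\ref{lem_three_like_lines} it would also be realizable as a
  line arrangement, a contradiction.  Hence it is not realizable by
  any arrangement of approaching pseudo-lines.
\end{proof}
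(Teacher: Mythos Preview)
Your proof is correct and is exactly the argument the paper intends: the corollary is stated immediately after Lemma~\ref{lem_three_like_lines} and the sentence citing Asinowski's six-element example, and the deduction you give is precisely the combination of those two facts. There is nothing to add.
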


With the modification of Asinowski's example shown in
\figurename~\ref{fig_non_realizable}, we obtain an arrangement not
having an isomorphic approaching arrangement.  The modification adds
two almost-vertical lines crossing in the north-cell s.t.\ they form a
wedge crossed by the lines of Asinowski's example in the order of
$\pi_1$.  We do the same for $\pi_2$.  The resulting object is a
simple pseudo-line arrangement, and each isomorphic arrangement
contains Asinowski's sequence.

\begin{figure}
\centering
\includegraphics{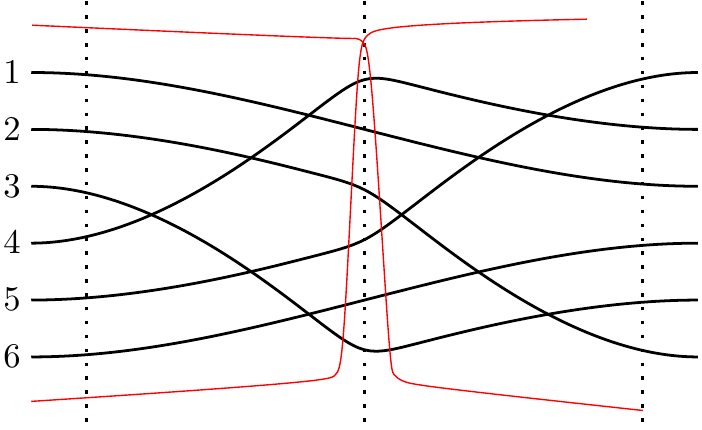}
\caption{A part of a six-element pseudo-line arrangement (bold) whose
  suballowable sequence (indicated by the vertical lines) is
  non-realizable (adapted from~\cite[Fig.~4]{sub_allowable}).  Adding
  the two thin pseudo-lines crossing in the vicinity of the vertical
  line crossed by the pseudo-lines in the order of $\pi_1$ and doing
  the same for $\pi_2$ enforces that the allowable sequence of any
  isomorphic arrangement contains the subsequence
  $(\identity, \pi_1, \pi_2)$.}
\label{fig_non_realizable}
\end{figure}

\begin{cor}
  There are pseudo-line arrangements for which there exists no
  isomorphic arrangement of approaching pseudo-lines.
\end{cor}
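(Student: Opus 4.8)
The plan is to combine Corollary~\ref{cor_no_suballowable} with a ``forcing gadget'' that promotes the non-realizability of a short suballowable sequence into the non-existence of an isomorphic approaching arrangement. Concretely, let $(\identity,\pi_1,\pi_2)$ be Asinowski's suballowable sequence on six elements that is not realizable as a line arrangement; by Lemma~\ref{lem_three_like_lines} it is not realizable by approaching pseudo-lines either. Take a pseudo-line arrangement $A$ of six pseudo-lines witnessing this suballowable sequence, i.e., one whose allowable sequence, read at two vertical lines $v_1$ and $v_2$, induces the orders $\pi_1$ and $\pi_2$. The goal is to enlarge $A$ to a pseudo-line arrangement $\widehat{A}$ so that \emph{every} arrangement isomorphic to $\widehat{A}$ still contains $(\identity,\pi_1,\pi_2)$ as a suballowable sequence at two of its crossings, and then to invoke Lemma~\ref{lem_three_like_lines} to conclude $\widehat{A}$ has no isomorphic approaching realization.

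The main work is building the gadget. I would add, near $v_1$, two almost-vertical pseudo-lines $a_1,b_1$ that cross each other in the north-face and together form a narrow wedge, arranged so that the six pseudo-lines of $A$ cross the wedge in exactly the order $\pi_1$; symmetrically add $a_2,b_2$ near $v_2$ forming a wedge crossed in the order $\pi_2$. This can be done keeping the whole object a simple pseudo-line arrangement (one checks $x$-monotonicity and that each new pair of curves meets every other curve exactly once). The key combinatorial point is that in \emph{any} arrangement $\widehat{A}'$ isomorphic to $\widehat{A}$, the crossing of $a_1$ with $b_1$ and the crossing of $a_2$ with $b_2$ are two distinguished vertices, and the cyclic order in which the six ``original'' pseudo-lines emanate around each of these vertices is an isomorphism invariant. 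A pseudo-sweep of $\widehat{A}'$ starting in the north-face can be routed through the $a_1b_1$-crossing and then through the $a_2b_2$-crossing; just after the first wedge it has recorded the permutation $\pi_1$ on the six original lines (because the wedge forces that order), and just after the second it has recorded $\pi_2$. Deleting the four auxiliary pseudo-lines from that pseudo-sweep yields an allowable sequence for the six original pseudo-lines of $\widehat{A}'$ that passes through $\identity$, $\pi_1$, and $\pi_2$ in this order, i.e., realizes the suballowable sequence $(\identity,\pi_1,\pi_2)$.

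Assembling the contradiction: if $\widehat{A}$ had an isomorphic approaching arrangement $\widehat{A}'$, then deleting four pseudo-lines from $\widehat{A}'$ would give an approaching arrangement of six pseudo-lines whose allowable sequence contains $(\identity,\pi_1,\pi_2)$ as a suballowable sequence (deleting pseudo-lines from an approaching arrangement trivially leaves it approaching, since the defining monotonicity condition is inherited by any subfamily). But Lemma~\ref{lem_three_like_lines} says this suballowable sequence would then be realizable by lines, contradicting Asinowski's result via Corollary~\ref{cor_no_suballowable}. Hence no isomorphic approaching arrangement of $\widehat{A}$ exists.

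The step I expect to be the real obstacle is verifying that the wedge gadget actually \emph{forces} the order $\pi_i$ on the six original pseudo-lines in every isomorphic copy, rather than merely in the specific drawing — that is, pinning down precisely which isomorphism invariant does the forcing. The clean way is to observe that the two auxiliary pseudo-lines of a wedge cross each other in an unbounded face (the north-face), so ``the face on the far side of both of them'' is again unbounded and, up to isomorphism, plays the role of a vertical line crossing the six original pseudo-lines: the sequence of original pseudo-lines met when moving from the north-face to that opposite face along the wedge is exactly the order of the original pseudo-lines ``at that wedge'', and this sequence is preserved by any cell-complex isomorphism fixing the north-face. Making this rigorous — essentially, that a pencil of two almost-vertical pseudo-lines can substitute for a sweep line in the isomorphism class — is where the care is needed; everything else is bookkeeping on allowable sequences.
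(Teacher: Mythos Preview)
Your proposal is correct and follows essentially the same approach as the paper: take Asinowski's six-element non-realizable suballowable sequence $(\identity,\pi_1,\pi_2)$, and at each of the two positions add a pair of almost-vertical pseudo-lines crossing in the north-face so as to form a wedge that the six original pseudo-lines cross in the prescribed order, thereby forcing every isomorphic arrangement to exhibit the forbidden suballowable sequence. Your writeup is in fact more detailed than the paper's, which simply asserts that ``each isomorphic arrangement contains Asinowski's sequence'' without spelling out the pseudo-sweep argument you give; the point you flag as the real obstacle---that the wedge forces $\pi_i$ as an isomorphism invariant---is exactly the step the paper leaves to the reader.
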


Aichholzer et al.~\cite{monotone_paths} construct a suballowable
sequence $(\identity, \pi_1, \pi_2)$ on $n$ lines s.t.\ all line
arrangements realizing them require slope values that are exponential
in the number of lines.  Thus, also vertex coordinates in a polygonal
representation as an approaching arrangement are exponential in $n$.


Ringel's Non-Pappus arrangement~\cite{ringel} shows that there are
allowable sequences that are not realizable by straight lines.  It is
not hard to show that the Non-Pappus arrangement has a realization
with approaching pseudo-lines. We will show that in fact the number of
approaching arrangements, is asymptotically larger than the number of
arrangements of lines.



\begin{theorem}\label{thm_number}
  There exist $2^{\Theta(n^2)}$ isomorphism classes of simple
  arrangements of $n$ approaching pseudo-lines.
\end{theorem}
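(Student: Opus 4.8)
The plan is to establish the two bounds separately. The upper bound $2^{O(n^2)}$ is immediate: every simple arrangement of approaching pseudo-lines is in particular a simple arrangement of pseudo-lines, and it is classical that there are only $2^{O(n^2)}$ isomorphism classes of simple pseudo-line arrangements (equivalently, simple allowable sequences on $n$ elements, since the sequence has $\binom n2$ adjacent transpositions each of which has at most $n-1$ positions). So the content is the lower bound $2^{\Omega(n^2)}$.

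For the lower bound I would exploit the ``freedom'' direction already emphasized in the paper, in particular the convexity picked out by Theorem~\ref{thm_realizability}: a polygonal approaching arrangement is exactly a feasible point of a linear program whose variables are the $y$-coordinates $y_{i,c}$ at the helper lines. First I would fix $m = \Theta(n)$ ``background'' pseudo-lines placed as straight lines in general position, occupying say the top part of the picture, so that they already create $\Theta(n^2)$ crossings and hence $\Theta(n^2)$ slabs between consecutive helper lines. Then I would introduce a single additional, very flat pseudo-line $\ell$ that runs below all crossings of the background lines but is free to weave: in each slab its slope relative to a fixed reference can be chosen in (at least) two ways that are compatible with the approaching condition, because the approaching constraints $y_{i,c}-y_{j,c}\ge y_{i,c+1}-y_{j,c+1}$ only bound how fast differences shrink, and by taking the background lines steep and $\ell$ nearly horizontal there is slack to perturb $\ell$'s crossing pattern with one designated background line slab-by-slab without violating anything. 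Each such binary choice flips whether a crossing of $\ell$ with that line falls before or after a fixed reference crossing in the sweep order, changing the allowable sequence and hence (after checking it also changes the isomorphism class, not merely the $x$-isomorphism class) giving $2^{\Omega(n)}$ classes from one extra pseudo-line. Iterating this with $\Theta(n)$ extra free pseudo-lines, each contributing an independent $2^{\Omega(n)}$ factor, yields $2^{\Omega(n^2)}$.

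Concretely, the cleanest realization of this idea is probably to take $n/2$ ``rigid'' lines and $n/2$ ``floppy'' pseudo-lines, route all the floppy ones far below, and in each of the $\Theta(n^2)$ cells argue via Lemma~\ref{lem_polygonal} and the LP of Theorem~\ref{thm_realizability} that a local modification keeping all other crossing-orders fixed stays feasible; feasibility is the whole point, since once the LP is nonempty an actual approaching realization exists. To turn $x$-isomorphism-type counting into isomorphism-type counting one can attach, as in the construction around \figurename~\ref{fig_non_realizable}, two steep pseudo-lines that ``pin down'' the relevant vertical orderings combinatorially, so that distinct sweep patterns remain distinguishable up to isomorphism; alternatively one argues directly that the encoded crossing/non-crossing data is an isomorphism invariant.

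The main obstacle I expect is the independence of the binary choices: one must guarantee that toggling the crossing order in slab $c$ does not force a change elsewhere, and in particular that it neither destroys an existing crossing nor creates a spurious double crossing, while still respecting all $\binom n2$ approaching inequalities simultaneously. This is exactly where the LP viewpoint helps — one exhibits, for each vector of $\Theta(n^2)$ binary choices, an explicit feasible point (e.g.\ background lines with slopes $2^i$ and floppy lines with tiny, separately tunable slopes in each slab), so the constraints decouple numerically even though they look coupled combinatorially. Verifying that these $2^{\Omega(n^2)}$ feasible points realize pairwise non-isomorphic arrangements, rather than merely non-$x$-isomorphic ones, is the second delicate point and is handled by the pinning construction above.
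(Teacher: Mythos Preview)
Your proposal is a strategy outline rather than a proof, and the central step is missing. You never give an explicit family of approaching arrangements; instead you assert that ``there is slack'' and that ``the constraints decouple numerically,'' promising an LP feasible point for each of $2^{\Omega(n^2)}$ bit-vectors without ever exhibiting one. The approaching condition is a global inequality on every pair and every pair of $x$-values, and your floppy pseudo-lines must satisfy it simultaneously against the steep background lines \emph{and} against one another; nothing in the sketch shows why the $\Theta(n^2)$ local toggles can be realized independently. A second gap is the passage from $x$-isomorphism to isomorphism: your binary choices only move where along the sweep a crossing occurs, which a priori changes the allowable sequence but not the cell complex; the ``pinning'' fix you suggest costs two extra pseudo-lines per enforced permutation, and you do not account for how many such pinnings are needed or whether the pinning lines themselves remain approaching with respect to all the floppy ones.

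The paper avoids both issues by adapting Matou\v{s}ek's explicit construction: take a horizontal grid $h_i:y=i^2$, a vertical grid $v_j:x=j$, and parabolic curves $p_i:y=(x+i)^2-\varepsilon$. Each $p_i$ passes just below the crossing of $h_{i+j}$ and $v_j$, and one can \emph{locally} reroute $p_i$ above that crossing by a short line segment of slope $2(i+j)$. The approaching property survives because the derivative of $p_i$ is monotone and the neighbouring parabolas $p_{i\pm1}$ attain that same slope $2(i+j)$ exactly one unit to the left/right, so the vertical gaps stay monotone; this is a one-line derivative check, not an LP argument. Crucially, ``above versus below a crossing of two other curves'' is already an isomorphism invariant of the cell complex, so the $\Theta(n^2)$ independent bits directly yield $2^{\Theta(n^2)}$ non-isomorphic arrangements of $3n$ pseudo-lines --- no pinning, no $x$-isomorphism detour.
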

\begin{proof}
  The upper bound follows from the number of non-isomorphic
  arrangements of pseudo-lines.  Our lower-bound construction is an
  adaptation of the construction presented by
  Matou\v{s}ek~\cite[p.~134]{Matousek} for general pseudo-line
  arrangements.  See the left part \fig{fig_lower_bound} for a sketch
  of the construction.  We start with a construction containing
  parallel lines that we will later perturb.  Consider a set $V$ of
  vertical lines $v_i : x = i$, for $i \in [n]$.  Add horizontal
  pseudo-lines $h_i : y = i^2$, for $i \in [n]$.  Finally, add
  parabolic curves $p_i : y = (x + i)^2 - \varepsilon$, defined for
  $x \geq 0$, some $0 < \varepsilon \ll 1$, and $i \in [n]$ (we will
  add the missing part towards left infinity later).  Now, $p_i$
  passes slightly below the crossing of $h_{i+j}$ and $v_j$ at
  $(j,(i+j)^2)$.  See the left part \fig{fig_lower_bound} for a sketch
  of the construction.  We may modify $p_i$ to pass above the crossing
  at $(j,(i+j)^2)$ by replacing a piece of the curve near this point
  by a line segment with slope $2(i+j)$; see the right part of
  \fig{fig_lower_bound}.  Since the derivatives of the parabolas are
  increasing and the derivatives of $p_{i+1}$ at $j - 1$ and of
  $p_{i-1}$ at $j + 1$ are both $2(j+i)$ the vertical distances from
  the modified $p_i$ to $p_{i+1}$ and $p_{i-1}$ remain increasing,
  i.e., the arrangement remains approaching.

  For each crossing $(j,(i+j)^2)$, we may now independently decide
  whether we want $p_i$ to pass above or below the crossing.  The
  resulting arrangement contains parallel and vertical lines, but no
  three points pass through a crossing.  This means that we can
  slightly perturb the horizontal and vertical lines s.t.\ the
  crossings of a horizontal and a vertical remain in the vicinity of
  the original crossings, but no two lines are parallel, and no line
  is vertical.  To finish the construction, we add rays from the
  points on $p_i$ with $x=0$, each having the slope of $p_i$ at $x=0$.
  Each arrangement of the resulting class of arrangements is
  approaching.  We have $\Theta(n^2)$ crossings for which we make
  independent binary decisions. Hence the class consists of
  $2^{\Theta(n^2)}$ approaching arrangements of $3n$ pseudo-lines.
\end{proof}

\begin{figure}
\centering
\includegraphics{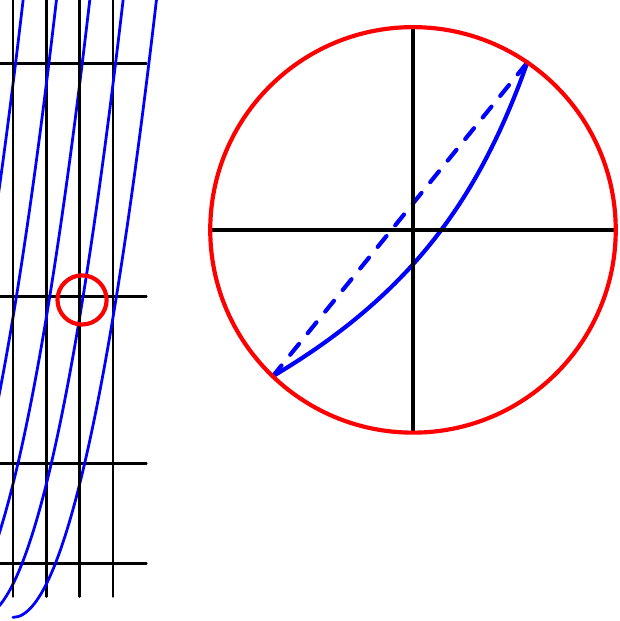}
\caption{A construction for an $2^{\Omega(n^2)}$ lower bound on the
  isomorphism classes of approaching arrangements.}
\label{fig_lower_bound}
\end{figure}

As there are only $2^{\Theta(n \log n)}$ isomorphism classes of simple
line arrangements~\cite{upper_bounds_configurations}, we see that we
have way more arrangements of approaching pseudo-lines.

The number of allowable sequences is
$2^{\Theta(n^2 \log n)}$~\cite{stanley}. We show next that despite of
the existence of nonrealizable suballowable sequences
(Corollary~\ref{cor_no_suballowable}), the number of allowable
sequences for approaching arrangements, i.e., the number of
$x$-isomorphism classes of these arrangements, is 
asymptotically the same as the number of all allowable sequences.

\begin{theorem}
  There are $2^{\Theta(n^2 \log n)}$ allowable sequences realizable as
  arrangements of approaching pseudo-lines.
\end{theorem}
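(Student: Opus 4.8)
The upper bound is free: there are only $2^{\Theta(n^2\log n)}$ allowable sequences on $n$ elements in total~\cite{stanley}, and every arrangement of approaching pseudo-lines realizes one of them. So the whole content is the lower bound, i.e., exhibiting $2^{\Omega(n^2\log n)}$ approaching arrangements of $n$ pseudo-lines whose allowable sequences are pairwise distinct. The plan is to adapt the classical ``flat versus steep'' construction that shows there are $2^{\Omega(n^2\log n)}$ allowable sequences, and to check that it can be realized by \emph{approaching} pseudo-lines.

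Let $\ell=\floor{n/2}$ and $k=\ceil{n/2}$. Take $\ell$ almost-horizontal straight lines $h_1,\dots,h_\ell$ with $h_a(x)=a+\delta_a x$, where $0<\delta_1<\dots<\delta_\ell$ are chosen so tiny that each $h_a$ stays within $[a-\tfrac1{10},a+\tfrac1{10}]$ over the relevant range; with this indexing the $h_a$ are pairwise approaching, and their $\binom{\ell}{2}$ mutual crossings all lie far to the left. The remaining $k$ pseudo-lines $s_1,\dots,s_k$ are ``steep'': in a central band they weave upward from below all the $h_a$ to above all of them, each crossing every $h_a$ exactly once, and their $\binom{k}{2}$ mutual crossings all lie far to the right. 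For any such arrangement the $x$-order of the $k\ell$ mixed crossings $s_t\cap h_a$ must put $s_t\cap h_a$ before $s_t\cap h_{a'}$ whenever $a<a'$ (each $s_t$ is $x$-monotone) and before $s_{t'}\cap h_a$ whenever $t<t'$ (the $s_t$ enter the band stacked in that order just below the $h$-block, and none has yet crossed another). Thus the $x$-order of the mixed crossings is a linear extension of the $k\times\ell$ rectangular poset; the transpositions recorded for these crossings are the pairwise distinct index pairs $\{a,\ell+t\}$, and in the allowable sequence they form a contiguous block (positions $\binom{\ell}{2}+1,\dots,\binom{\ell}{2}+k\ell$, since all $h$--$h$ crossings precede it and all $s$--$s$ crossings follow it) whose internal order is exactly that linear extension. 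Hence distinct linear extensions yield distinct allowable sequences. The linear extensions of the $k\times\ell$ rectangle are exactly its standard Young tableaux, and by the hook-length formula there are $(k\ell)!/\prod_{\text{cells}}h(\cdot)\ge (k\ell)!/(n-1)^{k\ell}=2^{\Omega(n^2\log n)}$ of them (using $(k\ell)!\ge(k\ell/e)^{k\ell}$ and $k\ell=\Theta(n^2)$). So it suffices to realize every linear extension by an approaching arrangement.

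I would realize a target linear extension $\sigma$ explicitly. The corresponding standard Young tableau assigns to the cell $(t,a)$ a rank $m_{t,a}\in\{1,\dots,k\ell\}$ that increases with $a$ for fixed $t$; define $s_t$ to be an $x$-monotone curve with $s_t(m_{t,a})=h_a(m_{t,a})$ for all $a$ (interpolated monotonically between these prescribed values, which is possible since they increase with $m_{t,a}$), extended outside $[1,k\ell]$ by steep rays, with the rays chosen so that $s_1>\dots>s_k$ holds throughout the band and all $s_t$ lie below all $h_a$ at left infinity. Then $s_t$ crosses $h_a$ exactly at $x=m_{t,a}$, so the mixed crossings occur in the order $\sigma$. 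The delicate point --- and the main obstacle --- is keeping the $s_t$ pairwise approaching without losing the freedom to realize an arbitrary $\sigma$. Since two $x$-monotone function graphs are approaching precisely when their difference is monotone and surjective, I will additionally require the derivatives to be pointwise ordered, $s_1'<\dots<s_k'$, with all of them much larger than $\max_a\delta_a$; the latter makes every $h_a$--$s_t$ pair approaching and forces $s_t$ to cross each $h_a$ exactly once, while the former makes each $s_t$--$s_{t'}$ pair approaching and is exactly what pushes the $s_t$--$s_{t'}$ crossings to the right of the band (so that they are all between two $s$-pseudo-lines, as needed). This still leaves ample slack: only the prescribed interpolation values and the pointwise order of the derivatives are constrained, so each $s_t$ can still be routed through its prescribed points; alternatively, one can verify feasibility through the linear program of Theorem~\ref{thm_realizability}. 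With $\ell+k=n$ this yields $2^{\Omega(n^2\log n)}$ approaching arrangements of $n$ pseudo-lines with pairwise distinct allowable sequences, matching the upper bound.
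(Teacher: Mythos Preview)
Your counting via standard Young tableaux is clean and gives the right asymptotic, but the realization step has a genuine gap. You assert that for \emph{every} SYT you can route the steep curves through the prescribed points $(m_{t,a},h_a(m_{t,a}))$ while maintaining the pointwise slope order $s_1'<\dots<s_k'$, and you dismiss this with ``ample slack'' and a pointer to the LP of Theorem~\ref{thm_realizability}. Neither is a proof. The slope-order constraint is global and interacts nontrivially with the level constraints: if, say, the SYT is the row-by-row tableau $m_{t,a}=(t-1)\ell+a$, then on $[1,\ell]$ the curve $s_1$ must rise by $\ell-1$ while every $s_t$ with $t\ge 2$ must both (i) stay below $h_1\approx 1$ and (ii) have slope at least $s_1'$ everywhere. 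These are not contradictory---one can push $s_2,\dots,s_k$ to very negative starting values---but making this work simultaneously for all $t$ and all intervals, for an \emph{arbitrary} SYT, is exactly the content of the theorem and you have not supplied it. Moreover, your description leaves $s_t$ undefined on the part of the band past its last prescribed point (e.g.\ $s_1$ on $[m_{1,\ell},k\ell]$), yet the slope-order constraint must still hold there. The LP fallback is circular: Theorem~\ref{thm_realizability} lets one \emph{decide} feasibility for a given allowable sequence, it does not certify feasibility for the family you need.

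The paper sidesteps this difficulty entirely. It reuses the parabola construction of Theorem~\ref{thm_number}: the $h_i$ are the lines $y=i^2$ and the $p_i$ are the translated parabolas $y=(x+i)^2-\varepsilon$. Because the $p_i$ are horizontal translates of a single convex curve, they are automatically pairwise approaching, and the crossing $p_i\cap h_{i+j}$ sits near $x=j$. For each fixed $j$ there are $n-j$ such crossings (one per $i$), and each can be shifted independently by a small amount $\alpha_i$ by replacing a short arc of $p_i$ with a segment of its tangent slope $2(i+j)$---a modification that preserves approaching precisely because it matches the tangent slope. Choosing the $\alpha_i$ according to an arbitrary permutation of $[n-j]$, independently for each $j$, yields $\prod_{j}(n-j)!=2^{\Omega(n^2\log n)}$ distinct allowable sequences. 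The key point is that the approaching property is inherited from an explicit smooth family and only perturbed locally along tangent directions; no global slope-ordering argument across separate curves is needed.
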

\begin{proof}
  The upper bound follows from the number of allowable sequences.  For
  the lower bound, we use the construction in the proof of
  Theorem~\ref{thm_number}, but omit the vertical lines.  Hence, we
  have the horizontal pseudo-lines $h_i : y = i^2$ and the paraboloid
  curves $p_i : y = (x + i)^2 - \varepsilon$, defined for $x \geq 0$
  and $0 < \varepsilon \ll 1$.  For a parabolic curve $p_i$ and a
  horizontal line $h_{i+j}$, consider the neighborhood of the point
  $(j,(i+j)^2)$.  Given a small value $\alpha$ we
  can replace   a piece of $p_i$ by the appropriate line segment of slope $2(i+j)$
  such that the crosssing of $h_{i+j}$ and the modified $p_i$ 
  has $x$-coordinate $j-\alpha$.

  For fixed $j$ and any permutation $\pi$ of $[n-j]$ we can define
  values $\alpha_i$ for $i \in [n-j]$ such that
  $\alpha_{\pi(1)} < \alpha_{\pi(2)} < \ldots \alpha_{\pi(n-j)}$.
  Choosing the offset values $\alpha_i$ according to different permutations $\pi$
  yields different vertical permutations in the neighborhood of $x=j$, i.e., the
  allowable sequences of the arrangements differ.
  Hence, the number allowable sequences of approaching arrangements
  is at least the superfactorial
  $\prod_{j=1}^{n} j!$, which is in $2^{\Omega(n^2 \log n)}$.
\end{proof}

We have seen that some properties of arrangements of lines are
inherited by approaching arrangements.  It is known that every simple
arrangement of pseudo-lines has $n-2$ triangles, the same is true for
non-simple non-trivial arrangements of lines, however, there are
non-simple non-trivial arrangements of pseudo-lines with fewer
triangles, see~\cite{felsner_kriegel}.  We conjecture that in this
context approaching arrangements behave like line arrangements.

\begin{conjecture}
  Every non-trivial arrangement of $n$ approaching pseudo-lines has at
  least $n-2$ triangles.
\end{conjecture}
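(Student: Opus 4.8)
The plan is to argue by induction on $n$, combining a monotonicity estimate for the number of triangular cells with the vertical-translation freedom of approaching arrangements (Observation~\ref{obs:main}); since simple arrangements already satisfy the claimed Euclidean bound, the real content is the non-simple case. Write $t(A)$ for the number of triangular cells of $A$. The base cases are small arrangements and \emph{near-pencils} (all pseudo-lines but one through a common point $p$): a near-pencil on $m$ pseudo-lines has exactly the $m-2$ triangular cells bounded by $p$ and two consecutive crossings on the transversal pseudo-line, so the bound is attained there. For the inductive step let $A$ be approaching, non-trivial and not a near-pencil. For $n\ge 4$ at least one of the two extreme pseudo-lines $\ell_1,\ell_n$ can be deleted so that the remaining arrangement $A'$ is still non-trivial: if both $A\setminus\ell_1$ and $A\setminus\ell_n$ were pencils, the $\ge 2$ pseudo-lines common to both lists would pass through two points at once, making $A$ a pencil. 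Deleting such an $\ell$ leaves $A'$ approaching, and by Lemma~\ref{lem_polygonal} (together with Lemma~\ref{lem_strictly} when $A$ is simple) we may work with a polygonal model. The inductive hypothesis gives $t(A')\ge (n-1)-2=n-3$, so it suffices to prove $t(A)\ge t(A')+1$.

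The first ingredient is the weaker bound $t(A)\ge t(A')$. Every cell of $A$ lies inside a unique cell of $A'$: a cell of $A'$ not met by $\ell$ reappears unchanged, one met by $\ell$ is cut into exactly two cells. Since $\ell$ crosses every other pseudo-line exactly once, it cannot enter and leave a cell through the same edge, nor can it run through two vertices of a triangular cell. Hence, \emph{provided $\ell$ meets every cell of $A'$ in a connected arc}, it splits each triangular cell of $A'$ into a triangle and a quadrilateral, or into two triangles; assigning to each triangle of $A'$ one of the triangles thus produced gives an injection into the triangular cells of $A$, so $t(A)\ge t(A')$ and, by induction, $t(A)\ge n-3$. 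Connectedness of $\ell\cap C$ is exactly the place where ``approaching'' is indispensable, and I would establish it by showing that cells of approaching arrangements are $x$-monotone regions: because the signed vertical distance between any two pseudo-lines is monotone, the upper and lower boundary chains of a cell can each meet an $x$-monotone approaching transversal at most once. This property fails for arbitrary pseudo-line arrangements, which is presumably why the Felsner--Kriegel examples~\cite{felsner_kriegel} with fewer than $n-2$ triangles admit no approaching drawing.

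The crux is the missing unit: producing a triangular cell of $A$ that lies inside a \emph{non}-triangular cell of $A'$. Here I would translate $\ell$ vertically, away from the remaining pseudo-lines, using Observation~\ref{obs:main}; all intermediate arrangements stay approaching, and, after refining to a simple allowable sequence (Lemma~\ref{lem_simple_sweep}), the combinatorics changes only through triangle flips, each flip being $\ell$ passing over a crossing of two other pseudo-lines. Once $\ell$ lies beyond all crossings of $A'$, all its own crossings are pushed to one side and $\ell$ interacts only with the essentially convex tails of the other pseudo-lines; in this regime one determines $t$ directly and checks that it includes a triangular cell not inherited from any triangle of $A'$, so $t$ has strictly increased relative to $A'$. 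Running the flips backward to $A$, I would follow one such uninherited triangular cell and argue that a triangle flip can destroy it only by simultaneously creating a replacement in its place, so that at least one uninherited triangle is still present in $A$, giving $t(A)\ge t(A')+1$. Turning this bookkeeping into a proof is the main obstacle; it is closely tied to the (still open) bichromatic-triangle problem of the introduction, since a single triangle flip may turn a quadrilateral into a triangle while turning an adjacent triangle into a quadrilateral, so a naive potential argument fails and one seems to need a monotone area-type invariant --- analogous to the areas of monochromatic triangles exploited in the proof of Theorem~\ref{thm_bichromatic} --- or a more global argument. A secondary difficulty is the reduction of the non-simple case, where one must control how triangular cells behave as crossings merge, again relying on the approaching structure. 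As a fallback I would attempt a direct global count in the polygonal model, bounding $t(A)$ from below by a sum over the helper-line slabs --- each slab carrying a line arrangement, with neighbouring slabs linked by the monotone-difference inequalities --- which bypasses the induction at the price of a delicate Euler-formula accounting.
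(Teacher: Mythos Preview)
The statement you are attempting to prove is stated in the paper as a \emph{conjecture}; the authors explicitly leave it open and give no proof. There is therefore nothing in the paper to compare your attempt against.

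As for the attempt itself, you are honest about where it breaks down, and the breakdown is genuine. The inequality $t(A)\ge t(A')$ after deleting an extreme pseudo-line is plausible, and the $x$-monotonicity of cells in an arrangement of $x$-monotone curves is fine (a vertical line meets each pseudo-line once, so each open vertical segment between consecutive intersection points lies in a single cell; connectedness of the cell then forces its trace on the line to be a single such segment). But the decisive step---finding a triangular cell of $A$ that is \emph{not} inherited from a triangular cell of $A'$---is not proved. Your proposed mechanism, pushing $\ell$ past all crossings and then replaying the flips, does not give a monotone quantity: a single triangle flip can simultaneously destroy one ``uninherited'' triangle and create another elsewhere, so following a specific triangle backwards can fail, and you have no invariant replacing it. This is precisely the obstruction that makes the paper present the statement as a conjecture rather than a theorem, and it is the same obstruction that underlies the non-simple pseudo-line examples of~\cite{felsner_kriegel} with fewer than $n-2$ triangles: the approaching hypothesis has to enter more substantially than through Observation~\ref{obs:main} alone. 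Your fallback slab-by-slab Euler count is only a programme, not an argument. In short, what you have written is a reasonable outline of an attack with the hard step missing---which is consistent with the problem being open.
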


\section{Higher dimensions}\label{sec_higher}
An \emph{arrangement of pseudo-hyperplanes} in $\mathbb{R}^d$ is a finite set
$A$ of hypersurfaces, each homeomorphic to $\mathbb{R}^{d-1}$, with the
property that any $k\leq d$ of them intersect as $k$ hyperplanes (no two of
them parallel) do.  More formally, for any $h_1,\ldots,h_k\in A$, $k\leq n$,
the cell complex induced by $h_1,\ldots,h_k$ is isomorphic to
the cell complex of ${\bf e}^\perp_1,\ldots,{\bf e}^\perp_k$
where ${\bf e}^\perp_i$ is the hyperplane whose normal is the $i$th
vector ${\bf e}_i$ of the standard basis.

We focus on arrangements of $\emph{pseudo-planes}$
in $\mathbb{R}^3$. We define arrangements of approaching pseudo-planes
via one of the key properties observed for arrangements of
approaching pseudo-lines (Observation~\ref{obs:main}).

An \emph{arrangement of approaching pseudo-planes} in $\mathbb{R}^3$
is an arrangement of pseudo-planes $h_1,\ldots,h_n$ where each
pseudo-plane $h_i$ is the graph of a continuously differentiable
function $f_i: \mathbb{R}^2\mapsto \mathbb{R}$ such that for any
$c_1,\ldots,c_n\in \mathbb{R}$, the graphs of
$f_1+c_1,\ldots, f_n+c_n$ form a valid arrangement of pseudo-planes.
This means that we can move the pseudo-planes up and down along the
$z$-axis while maintaining the properties of a pseudo-plane
arrangement.  Clearly, arrangements of planes (without parallels) are
approaching.

Consider an arrangement of approaching pseudo-lines with pseudo-lines given by
continuously differentiable functions. The condition that $f_1(x)-f_2(x)$ is strictly
monotonically decreasing implies that for any $x$ the slope of $f_1(x)$ is at
most the slope of $f_2(x)$, where at some single points, they might be equal, e.g. at
$x=0$ for $f_1(x)=0$ and $f_2(x)=x^3$. In other words for each $x$ the
identity permutation is the sorted order of the slopes of the tangents at
$x$. Note that we may think of permutations as labeled Euclidean order types
in one dimension.

In this section we show an analogous characterization of approaching
arrangements of pseudo-planes: The two-dimensional order type associated to
the tangent planes above a point $(x,y)$ is the same except for a sparse set
of exceptional points where the order type may degenerate.

Let $G$ be a collection of graphs of continuously differentiable functions
$f_i: \mathbb{R}^2\mapsto \mathbb{R}$.  For any point $(x,y)$ in
$\mathbb{R}^2$, let $n_i(x,y)$ be the upwards normal vector of the tangent
plane of $f_i$ above $(x,y)$. We consider the vectors $n_i(x,y)$ as points $p_i(x,y)$
in the plane with homogeneous coordinates.  (That is, for each vector we
consider the intersection of its ray with the plane $z=1$.)  We call
$p_i(x,y)$ a \emph{characteristic point} and let $P_G(x,y)$ be the set of
characteristic points. The Euclidean order type of the point
multiset $P_G(x,y)$ is the \emph{characteristic order type} of $G$ at $(x,y)$,
it is denoted $\chi_G(x,y)$.

We denote by $\chi_G$ the set of characteristic order types of $G$ on the
whole plane, that is, $\chi_G=\{\chi_G(x,y)|(x,y)\in\mathbb{R}^2\}$.  We say
that $\chi_G$ is \emph{admissible} if the following conditions hold:

\begin{enumerate}
\item[(1)] for any two points $(x_1,y_1)$ and $(x_2,y_2)$ in the plane, we
  have that if an ordered triple of characteristic points in $P_G(x_1, y_1)$ is
  positively oriented, then the corresponding triple in
  $P_G(x_2,y_2)$ is either positively oriented or collinear;
\item[(2)] for any triple $p_1,p_2,p_3$ of characteristic points, the set of
  points in the plane for which $p_1, p_2, p_3$ are collinear is either the
  whole plane or a discrete set of points (i.e, for each $(x,y)$ in this set
  there is some $\varepsilon >0$ such that the $\varepsilon$-disc around $(x,y)$
  contains no further point of the set);
\item[(3)] for any pair $p_1,p_2$ of characteristic points, the set of
  points in the plane for which $p_1=p_2$ has dimension 0 or 1 (this
  implies that for each $(x,y)$ in this set and each $\varepsilon >0$
  the $\varepsilon$-disc around $(x,y)$ contains points which are not
  in the set).
\end{enumerate}

From the above conditions, we deduce another technical but useful property of
admissible characteristic order types.

\begin{lemma}\label{lem:2b}
  Let $\chi_G$ be an admissible order type and $|G|\geq 3$.
  For any pair $p_1,p_2\in \chi_G$ and
  for every point $(x_0,y_0)$ in the plane for which $p_1=p_2$ there is a 
  neighborhood $N$ such that for $V =\{p_2(x,y)-p_1(x,y) : (x,y)\in N\}$,
  the positive hull of $V$ contains no line.
\end{lemma}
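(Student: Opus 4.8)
The plan is to argue locally around $(x_0,y_0)$, using condition~(3) to rule out that $p_1=p_2$ holds on a neighborhood, and then use condition~(1) together with the presence of a third characteristic point to control the \emph{directions} in which $p_2-p_1$ can point. Write $v(x,y) = p_2(x,y) - p_1(x,y)$, so $v(x_0,y_0) = 0$. Since $G$ has at least three graphs, fix a third index and consider $p_3(x,y)$; by shrinking a first neighborhood we may assume (using condition~(2), or continuity) that $p_3$ stays close to $p_3(x_0,y_0)$ and, in particular, that $p_3(x_0,y_0) \neq p_1(x_0,y_0)$ if this triple is not permanently collinear — and if the triple \emph{is} permanently collinear we instead pick a different third point; some third point must be in "general enough" position since otherwise all characteristic points coincide at $(x_0,y_0)$, contradicting condition~(3) applied to the relevant pairs (here one uses $|G|\ge 3$ to find a pair whose coincidence locus is not the whole plane near $(x_0,y_0)$).

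The key step is the orientation argument. Suppose, for contradiction, that in \emph{every} neighborhood $N$ the positive hull of $V = \{v(x,y) : (x,y)\in N\}$ contains a line, i.e.\ contains two opposite rays. Then we can find sequences of points $(x_k,y_k)\to(x_0,y_0)$ and $(x'_k,y'_k)\to(x_0,y_0)$ such that $v(x_k,y_k)$ and $v(x'_k,y'_k)$ point in (nearly) opposite directions. Now look at the orientation of the triple $(p_1,p_2,p_3)$: at a point where $v=p_2-p_1$ points one way this triple has one orientation (assuming $p_3$ is not on the line through $p_1,p_2$, which holds after our neighborhood reduction and possibly a further choice of $(x_k,y_k)$ off the collinearity locus, which is discrete by~(2)), and at a point where $v$ points the opposite way the triple $(p_1,p_2,p_3)$ has the \emph{reversed} orientation. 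That is, we exhibit two points in the plane at which the same ordered triple of characteristic points has opposite (strict) orientations, directly contradicting condition~(1) (which forces: positively oriented somewhere $\Rightarrow$ positively oriented or collinear everywhere, hence never negatively oriented anywhere).

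The main obstacle I expect is the bookkeeping around degeneracies: I must ensure that the third point $p_3$ genuinely certifies an orientation, i.e.\ that I can choose the approaching sequences to avoid the locus where $p_1,p_2,p_3$ are collinear (handled by condition~(2): that locus is either everything — excluded by choosing a better third point, available since $|G|\ge 3$ — or discrete, hence avoidable) and the locus where $p_1=p_2$ (handled by condition~(3): dimension $\le 1$, so its complement is dense near $(x_0,y_0)$, and on that complement $v\neq 0$ so it has a well-defined direction). Once these are in place, the contradiction with~(1) is immediate, and contrapositively we obtain a neighborhood $N$ whose vector set $V$ has a positive hull containing no line. I would also remark that "positive hull contains no line" is equivalent to saying the directions of $v$ lie in an open halfplane, which is the form in which this lemma is presumably used later.
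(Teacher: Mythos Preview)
Your approach is essentially the paper's: pick a third characteristic point $p_3$, use continuity to keep it separated from $p_1,p_2$ near $(x_0,y_0)$, and observe that if $v=p_2-p_1$ attains two opposite directions in every neighborhood then the ordered triple $(p_1,p_2,p_3)$ attains both strict orientations, contradicting condition~(1). The paper's proof is three sentences long and does exactly this.

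Where you diverge is in the bookkeeping around the choice of $p_3$. The paper simply asserts one can take $p_3$ with $p_3(x_0,y_0)\neq p_1(x_0,y_0)$ and moves on; it does not invoke condition~(2) or~(3) at all. Your attempt to justify this choice is more elaborate than necessary and also not quite right: you argue that if every candidate third point failed, then ``all characteristic points coincide at $(x_0,y_0)$, contradicting condition~(3)'', but condition~(3) only bounds the \emph{dimension} of the coincidence locus of a pair and does not forbid all points from coinciding at an isolated $(x_0,y_0)$. Similarly, ``$p_3(x_0,y_0)\neq p_1(x_0,y_0)$ if this triple is not permanently collinear'' is not a valid implication. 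None of this damages the main line of your argument, which is correct; it just means your degeneracy discussion does not actually close the gap it is trying to close (and, to be fair, the paper leaves the same gap open). Once a suitable $p_3$ is in hand, your contradiction via condition~(1) is exactly what the paper does, and your later uses of conditions~(2) and~(3) to avoid the collinearity and coincidence loci along the approaching sequences are sound refinements that the paper glosses over with the word ``continuity''.
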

\begin{proof}
  Choose $p_3$ such that $p_3(x_0,y_0) \neq p_1(x_0,y_0) = p_2(x_0,y_0)$.  In
  a small neighborhood $N$ of $(x_0,y_0)$ point $p_3$ will stay away from the
  line spanned by $p_1$ and $p_2$ (continuity). If in $N$ the positive hull of
  $V$ contains a line, then the orientation of $p_1,p_2,p_3$ changes from
  positive to negative in $N$, this contradicts condition~(1) of admissible
  characteristic order types.
\end{proof}

\begin{theorem}
  Let $G$ be a collection of graphs of continuously differentiable functions
  $f_i: \mathbb{R}^2\mapsto \mathbb{R}$.  Then $G$ is an arrangement of
  approaching pseudo-planes if and only if $\chi_G$ is admissible and all the
  differences between two functions are surjective.
\end{theorem}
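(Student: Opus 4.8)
The plan is to prove both directions by relating the "movability" of the pseudo-planes to the characteristic order type. For the forward direction, assume $G$ is an arrangement of approaching pseudo-planes. Surjectivity of the differences $f_i - f_j$ is needed for the same reason as in the pseudo-line case: if some $f_i - f_j$ failed to be surjective, a sufficiently large vertical shift of $h_i$ would destroy the (unique, transversal) intersection with $h_j$, so the shifted family would not be an arrangement. To establish admissibility, the key observation is that for any two pseudo-planes $h_i, h_j$, restricting the arrangement to the pair $\{h_i,h_j\}$ and looking at the vertical plane over any line in $\mathbb{R}^2$ recovers a two-pseudo-line situation, from which one reads off that the slopes of the tangent planes are consistently ordered. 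More precisely, I would fix a direction and a point, and argue that the sign of the determinant formed by $p_i(x,y), p_j(x,y), p_k(x,y)$ — equivalently, whether the three tangent planes' normals are in convex position in a fixed cyclic sense — cannot flip, because a flip would correspond to a local re-ordering of where the three pseudo-planes' mutual intersection lines cross, and this re-ordering would survive an appropriate vertical shift and yield an invalid arrangement (two of the planes would then have to cross twice, or a triple point would become a forbidden configuration). This gives condition (1). Conditions (2) and (3) are then \emph{genericity} statements: the loci where three characteristic points are collinear, or two coincide, are solution sets of analytic (indeed, determinant-of-continuously-differentiable) equations, and if such a locus is neither everything nor lower-dimensional, one can find a vertical shift that turns a transversal configuration into a tangential one on a set of positive measure, again breaking arrangement-hood. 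Here I would lean on Lemma~\ref{lem:2b} together with a compactness/continuity argument local to a bad point.

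For the converse, assume $\chi_G$ is admissible and all differences $f_i - f_j$ are surjective; I must show that for arbitrary constants $c_1,\dots,c_n$ the graphs of $f_i + c_i$ form a valid pseudo-plane arrangement — i.e., every $k \le 3$ of them behave combinatorially like $k$ generic planes. For $k=2$: the intersection of $h_i + c_i$ and $h_j + c_j$ is the zero set of $(f_i - f_j) + (c_i - c_j)$; surjectivity gives non-emptiness, and I need this zero set to be a single curve homeomorphic to $\mathbb{R}$ meeting every vertical plane once, which follows from condition~(3) (the gradient of $f_i - f_j$ never vanishes except on a $0$- or $1$-dimensional set, so the level sets are manifolds that are monotone in the appropriate sense — here I'd reduce to the planar picture of Lemma~\ref{lem:2b} to rule out a level set containing a "vertical" line or closing up). For $k=3$: I must show the three intersection curves of $\{h_i+c_i, h_j+c_j, h_k+c_k\}$ meet in exactly one point (and transversally), exactly as three generic planes do. This is where condition~(1) does the real work: it forces the "slope order type" of any triple of tangent planes to be constant (up to degeneracy on a sparse set), which is precisely the statement that the three pairwise intersection lines, swept in a fixed direction, cross in a consistent order — guaranteeing a single common point regardless of the vertical offsets, with conditions~(2) handling the degenerate triple points.

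The main obstacle I expect is the $k=3$ part of the converse: passing from the \emph{pointwise} order-type condition on tangent planes to the \emph{global} statement that three offset pseudo-planes meet in exactly one point. The natural approach is to parametrize each pairwise intersection curve by $x$ (or by arc length), reduce the triple-intersection question to a one-dimensional intermediate-value argument, and then invoke admissibility to control the monotonicity of the relevant difference function along that curve; the surjectivity hypothesis is what closes the intermediate-value argument at infinity. A secondary technical nuisance is making the "sparse set of degeneracies" rigorous — one wants that a measure-zero or $1$-dimensional exceptional set cannot obstruct the manifold/transversality arguments, which should follow by a standard perturbation-within-$\chi_G$ argument since admissibility is stated precisely so as to be closed under the relevant limits, but it requires care to invoke at each degenerate point via a neighborhood as produced by Lemma~\ref{lem:2b}.
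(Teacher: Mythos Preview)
Your outline has the right two-direction, two-case ($k=2$, $k=3$) scaffolding, but the weight is misplaced and the $k=2$ converse has a real gap. You claim that condition~(3) makes the level sets of $f_i-f_j$ into manifolds because ``the gradient never vanishes except on a $0$- or $1$-dimensional set,'' but condition~(3) only bounds the locus where $\nabla(f_i-f_j)=0$; it does \emph{not} prevent that locus from meeting the level set you care about. At such a point the implicit function theorem tells you nothing, and the level set may locally look like several arcs meeting at a single point (a figure-eight, a trivalent node, etc.). This is exactly the situation the paper confronts head-on: it fixes a putative bad point $r$, partitions a small disc around $r$ into an even number $2k$ of sign-components of $f_2-f_1$, and then, by analysing the direction of $p_2-p_1$ along the component boundaries separately for $k$ even and $k$ odd, shows that the positive hull of $p_2-p_1$ over a neighborhood would have to contain a line, contradicting Lemma~\ref{lem:2b}. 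Only after this local argument does one still need the global step (ruling out cycles and multiple components by vertical translation until they collide), which you gesture at but do not carry out. So the hard work in the converse is $k=2$, not $k=3$; once pairwise intersections are single pseudo-lines, the triple case is short: a touching of two intersection curves inside $f_1$ forces the three normals to be coplanar there and to flip orientation across the touching point, immediately violating condition~(1).

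For the forward direction your reduction to vertical slices is also too weak for condition~(1): slicing by a vertical plane only yields pairwise slope comparisons along that line, whereas~(1) is a statement about the \emph{cyclic} orientation of a triple of characteristic points in $\mathbb{R}^2$, which is not recoverable from one-dimensional slope orders. The paper instead normalises (taking $f_2,f_3$ to be linear with $p_2=(0,1)$, $p_3=(0,-1)$) and shows directly that an orientation flip of $p_1$ across the $y$-axis forces $f_1$ to touch the line $f_2\cap f_3$ tangentially, so that $f_1\cap f_3$ and $f_2\cap f_3$ fail to be a two-pseudo-line arrangement in $f_3$. Your ``genericity'' sketch for conditions~(2) and~(3) is in the right spirit and matches how the paper dismisses those cases (``similar arguments''), but be aware that the functions are only $C^1$, not analytic, so you cannot invoke real-analytic zero-set structure; the argument has to go through the arrangement property and a vertical translation, as above.
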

\begin{proof}
  Note that being surjective is a necessary condition for the difference of
  two functions, as otherwise we can translate them until they do not
  intersect.  Thus, in the following, we will assume that all the differences
  between two functions are surjective.  We first show that if $\chi_G$ is
  admissible then $G$ is an arrangement of approaching pseudo-planes.  Suppose
  $G$ is not an arrangement of approaching pseudo-planes.  Suppose first that
  there are two functions $f_1$ and $f_2$ in $G$ whose graphs do not intersect
  in a single pseudo-line.  Assume without loss of generality that $f_1=0$,
  i.e., $f_1$ is the constant zero function.  Let $f_1\cap f_2$ denote the
  intersection of the graphs of $f_1$ and $f_2$.  If the intersection has a
  two-dimensional component, the normal vectors of the two functions are the
  same for any point in the relative interior of this component, which
  contradicts condition~(3), so from now on, we assume that $f_1\cap f_2$ is
  at most one-dimensional.  Also, note that due to the surjectivity of
  $f_2-f_1$, the intersection $f_1\cap f_2$ is not empty.  Note that if
  $f_1\cap f_2$ is a single pseudo-line then for every $r\in f_1\cap f_2$
  there exists a neighborhood $N$ in $f_1$ such that $f_1\cap f_2\cap N$ is a
  pseudo-segment.
  Further, on one side of the pseudo-segment, $f_1$ is below $f_2$, and above on the other, as otherwise we would get a contradiction to Lemma~\ref{lem:2b}.
  In the next two paragraphs we argue that indeed
  $f_1\cap f_2$ is a single pseudo-line. In paragraph (a) we show that for
  every $r\in f_1\cap f_2$ the intersection locally is a pseudo-segment; in
  (b) we show that $f_1\cap f_2$ contains no cycle and that $f_1\cap f_2$ has
  a single connected component.
  
  (a) Suppose for the sake of contradiction that $f_1\cap f_2$ contains a
  point $r$ such that for every neighborhood $N$ of $r$ in $f_1$ we have that
  $f_1\cap f_2\cap N$ is not a pseudo-segment.  
  For $\varepsilon >0$ let
  $N_\varepsilon$ be the $\varepsilon$-disc around $r$. Consider $\varepsilon$
  small enough such that $f_1\cap f_2\cap N_\varepsilon$ consists of a single
  connected component.
  Further, let $\varepsilon$ be small enough such that whenever we walk away from $r$ in a component where $f_2$ is above (below) $f_1$, the difference $f_2-f_1$ is monotonically increasing (decreasing).
  The existence of such an $\varepsilon$ follows from the fact that $f_1$ and $f_2$ are graphs of continuously differentiable functions.
  Then $f_1\cap f_2$ partitions $N_\varepsilon$ into
  several connected components $C_1,\ldots, C_m$, ordered in clockwise order around $r$. In each of these components, $f_2$ is either
  above or below $f_1$, and this sidedness is different for any two neighboring components.
  In particular, the number of components is even, that is, $m=2k$, for some natural number $k$.
  We will distinguish the cases where $k$ is even and odd, and in both cases we will first show that at $r$ we have $p_1=p_2$ and then apply Lemma~\ref{lem:2b}.
  
  We start with the case where $k$ is even.
  Consider a differentiable path $\gamma$ starting in $C_i$, passing through $r$ and ending in $C_{k+i}$.
  As $k$ is even, $f_2$ is above $f_1$ in $C_i$ if and only if $f_2$ is also above $f_1$ in $C_{k+i}$.
  In particular, the directional derivative of $f_2-f_1$ for $\gamma$ at $r$ is $0$.
  This holds for every choice of $i$ and $\gamma$, thus at $r$ all directional derivatives of $f_2-f_1$ vanish.
  This implies that at $r$ the normal vectors of $f_1$ and $f_2$, coincide, hence $p_1=p_2$.
  Now, consider the boundary of $C_i$.
  Walking along this boundary, $f_2-f_1$ is the constant zero function, and thus the directional derivatives vanish.
  Hence, at any point on this boundary, $p_2-p_1$ must be orthogonal to the boundary, pointing away from $C_i$ if $f_2$ is above $f_1$ in $C_i$, and into $C_i$ otherwise.
  Let now $a$ and $b$ be the intersections of the boundary of $C_i$ with the boundary of $N_\varepsilon$.
  The above argument gives us two directions of vectors, $p_2(a)-p_1(a)$ and $p_2(b)-p_1(b)$, and a set of possible directions of vectors $p_2(c)-p_1(c)$, $c\in C_i$, between them.
  By continuity, all of these directions must be taken somewhere in $C_i$ (see Figure \ref{fig_directions_existence} for an illustration).
  Let now $C_+$ be the set of all components where $f_2$ is above $f_1$, and let $D_+$ be the set of all directions of vectors $p_2(c)-p_1(c)$, $c\in C_+$.
  Further, let $V_+$ be the set of rays emanating from $r$ which are completely contained in $C_+$.
  By continuity, for every small enough $\varepsilon$, there are two rays in $V_+$ which together span a line.
  It now follows from the above arguments, that for these $\varepsilon$, the directions in $D_+$ also positively span a line.
  This is a contradiction to Lemma~\ref{lem:2b}.
  
  \begin{figure}
\centering
\includegraphics{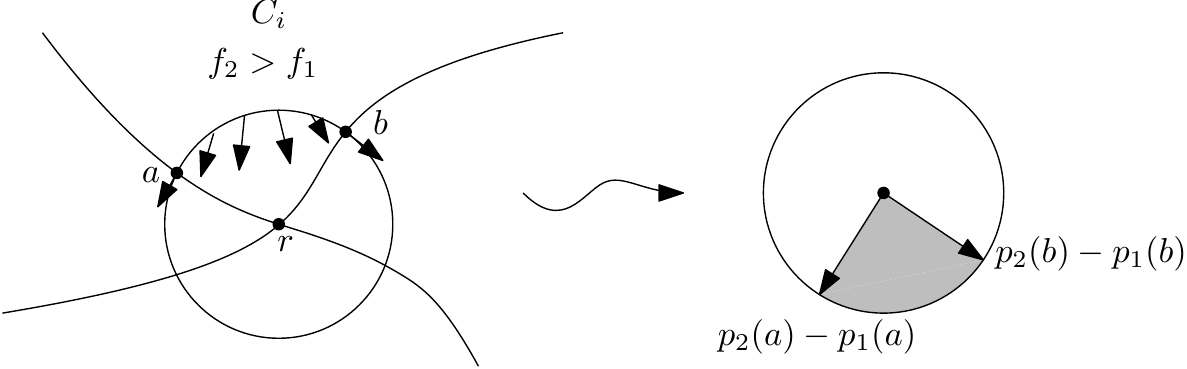}
\caption{A component $C_i$ induces many directions of $p_2-p_1$.}
\label{fig_directions_existence}
\end{figure}
  
  Let us now consider the case where $k$ is odd.
  Consider the boundary between $C_{2k}$ and $C_1$ and denote it by $\gamma_1$.
  Similarly, let $\gamma_2$ be the boundary between $C_k$ and $C_{k+1}$.
  Let now $\gamma$ be the path defined by the union of $\gamma_1$ and $\gamma_2$ and consider the vectors $p_2-p_1$ when walking along $\gamma$.
  Assume without loss of generality that $C_1\in C_+$, and thus $C_{2k}, C_{k+1}\in C_-$ and $C_{k}\in C_+$.
  Analogous to the arguments in the above case, along $\gamma$ the vectors $p_2-p_1$ are orthogonal to $\gamma$, pointing from $C_+$ into $C_-$.
  In particular, they always point to the same side of $\gamma$.
  However, at $r$ the path $\gamma$ is also incident to $C_2\in C_-$ and to $C_{k+2}\in C_+$.
  The same argument now shows that at $r$, the vector $p_2(r)-p_1(r)$ must point from $C_{k+2}$ into $C_2$, that is, into the other side of $\gamma$.
  This is only possible if $p_2(r)-p_1(r)=0$, and thus, as claimed, we again have $p_1=p_2$ at $r$.
  We can now again consider the set of directions $D_+$, and this time, for every small enough $\varepsilon$, the set $D_+$ is the set of all possible directions (see Figure \ref{fig_directions_span} for an illustration), which is again a contradiction to Lemma~\ref{lem:2b}.
  This concludes the proof of claim (a).
  
  \begin{figure}
\centering
\includegraphics{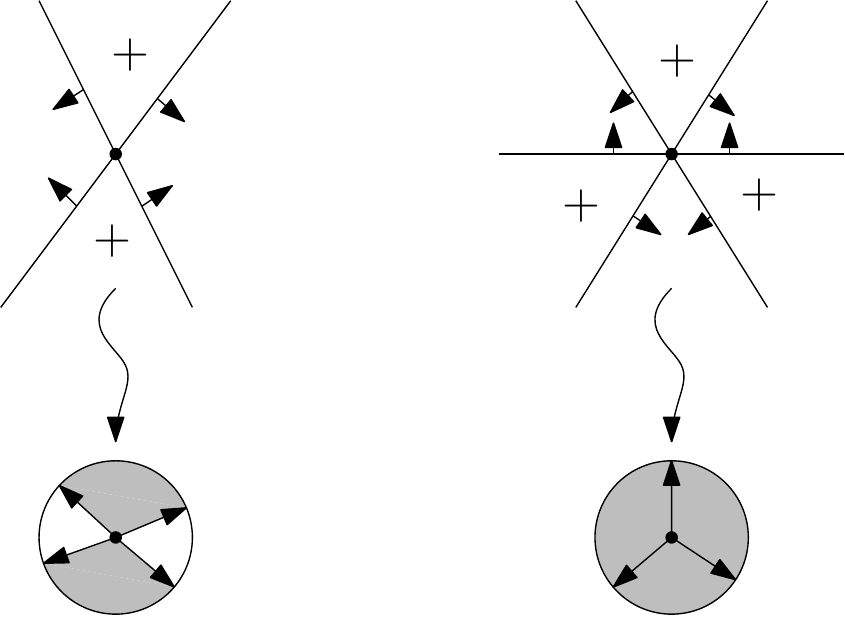}
\caption{$D_+$ spans a line for $k$ even (left) and contains all directions for $k$ odd (right).}
\label{fig_directions_span}
\end{figure}

  (b) Suppose that the intersection $f_1\cap f_2$ contains a cycle.  In the
  interior of the cycle, one function is above the other, so we can vertically
  translate it until the cycle contracts to a point, which again leads to a
  contradiction to Lemma~\ref{lem:2b}.  Now suppose that the intersection
  contains two disjoint pseudo-lines. Between the pseudo-lines, one function
  is above the other, so we can vertically translate it until the pseudo-lines
  cross or coincide.  If they cross, we are again in the case discussed in (a)
  and get a contradiction to Lemma \ref{lem:2b}.  If they coincide, $f_2-f_1$
  has the same sign on both sides of the resulting pseudo-line which again
  leads to a contradiction to Lemma \ref{lem:2b}.
  
  Thus, we have shown that if $\chi_G$ is admissible then any two pseudo-planes
  in $G$ intersect in a single pseudo-line.

  Now consider three functions $f_1,f_2,f_3$ such that any two intersect in a
  pseudo-line but the three do not form a pseudo-hyperplane arrangement.  Then
  in one of the three functions, say $f_1$, the two pseudo-lines defined by
  the intersections with the other two functions do not form an arrangement of
  two pseudo-lines; after translation, we can assume that they touch at a
  point or intersect in an interval.  First assume that they touch at a point.
  At this touching point, one normal vector of tangent planes is the linear
  combination of the other two: assume again without loss of generality that
  $f_1=0$.  Further assume without loss of generality that the curves
  $f_2\cap f_1$ and $f_3\cap f_1$ touch at the point $(0,0)$ and that the
  $x$-axis is tangent to $f_2\cap f_1$ at this point.  Then, as the two curves
  touch, the $x$-axis is also tangent to $f_3\cap f_1$.  In particular, the
  normal vectors to both $f_2$ and $f_3$ lie in the $y$-$z$-plane.  As the
  normal vector to $f_1$ lies on the $z$-axis, the three normal vectors are
  indeed linearly dependent.  For the order type, this now means that one
  vector is the affine combination of the other two, i.e., the three vectors
  are collinear.  Further, on one side of the point the three vectors are
  positively oriented, on the other side they are negatively oriented, which
  is a contradiction to condition~(1).  On the other hand, if they intersect
  in an interval, then the set of points where the vectors are collinear has
  dimension greater than 0 but is not the whole plane, which is a
  contradiction to condition~(2).

  This concludes the proof that if $\chi_G$ is admissible then $G$ is an
  arrangement of approaching pseudo-planes.

  For the other direction consider an approaching arrangement of pseudo-planes
  and assume that $\chi_G$ is not admissible.  First, assume that condition~(1)
  is violated, that is, there are three pseudo-planes $f_1,f_2,f_3$ whose
  characteristic points $p_1,p_2,p_3$ change their orientation from positive
  to negative.  In particular, they are collinear at some point.
  Assume without loss of generality that $f_2$ and $f_3$ are planes containing the origin whose characteristic points are thus constant, and assume without loss of generality that they are $p_2=(0,1)$ and $p_3=(0,-1)$.
  In particular, the intersection of $f_2$ and $f_3$ is the $x$-axis in $\mathbb{R}^3$.
  Consider now a $\varepsilon$-disc $B$ around the origin in $\mathbb{R}^2$ and let $B_<$, $B_0$ and $B_>$ be
  the subsets of $B$ with $x<0$, $x=0$ and $x>0$, respectively.
  Assume without loss of generality that in $B$ the characteristic point $p_1$ is to the left of the $y$-axis
  in $B_<$, to the right in $B_>$, and on the $y$-axis in $B_0$.
  Also, assume that $f_1$ contains the origin in $\mathbb{R}^3$.
  But then, $f_1$ is below the $(x,y)$-plane everywhere in $B$.
  In particular, $f_1$ touches $f_2\cap f_3$ in a single point, namely the origin.
  Hence, $f_1\cap f_3$ and $f_2\cap f_3$ is not an arrangement of two pseudo-lines in $f_3$.
  
 Similar arguments show that
\begin{enumerate}
\item if condition (2) is violated, then after some translation the intersection of some two pseudo-planes in a third one is an interval,
\item if condition (3) is violated, then after some translation the intersection of some two pseudo-planes has a two-dimensional component,
\end{enumerate}

\end{proof}

On the other hand, from the above it does not follow to what extent an arrangement of approaching pseudo-planes is determined by its admissible family of characteristic order types.
In particular, we would like to understand which admissible families of order types correspond to families of characteristic order types.
To that end, note that for every graph in an arrangement of approaching pseudo-planes, the characteristic points define a vector
field $F_i: \mathbb{R}^2\mapsto \mathbb{R}^2$, namely its gradient vector
field (a normal vector can be written as $(\text{d}f(x), \text{d}f(y), -1)$.)
In particular, the set of all graphs defines a map $\phi(i,x,y)$ with the
property that $\phi(i,\cdot,\cdot)=F_i$ and the order type of
$\phi(\cdot,x,y)$ is $\chi_G(x,y)$.  We call the family of vector fields
obtained by this map the \emph{characteristic field} of $G$.  A classic result
from vector analysis states that a vector field is a gradient vector field of
a scalar function if and only if it has no curl.  We thus get the following
result:

\begin{cor}\label{characteristic_field}
Let $(F_1,\ldots,F_n)$ be a family of vector fields.
Then $(F_1,\ldots,F_n)$ is the characteristic field of an arrangement of approaching pseudo-planes if and only if each $F_i$ is curl-free and for each $(x,y)\in\mathbb{R}^2$, the set of order types defined by $F_1(x,y),\ldots,F_n(x,y)$ is admissible.
\end{cor}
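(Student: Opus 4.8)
The plan is to read the corollary as a translation of the preceding theorem through the classical fact that, on the simply connected plane $\mathbb{R}^2$, a continuous vector field is curl-free precisely when it is a gradient field. Concretely, I would pass back and forth between an arrangement $G=\{f_1,\dots,f_n\}$ of graphs and the tuple $(\nabla f_1,\dots,\nabla f_n)$ of gradient fields, noting that this tuple is exactly the characteristic field of $G$ and that, for every $(x,y)$, the characteristic order type $\chi_G(x,y)$ is by definition the labelled order type of the point multiset $\{\nabla f_1(x,y),\dots,\nabla f_n(x,y)\}$ (read in the affine chart $z=1$, up to the sign convention fixed in the definition of characteristic points). Thus the characteristic field of $G$ and the family of characteristic order types of $G$ are two readings of the same object, and the preceding theorem is exactly the tool that relates this object to $G$ being approaching.

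For the forward direction I would take $(F_1,\dots,F_n)$ to be the characteristic field of an approaching arrangement $G=\{f_1,\dots,f_n\}$, so that $F_i=\nabla f_i$; each $F_i$ is then a gradient field and hence curl-free, and by the preceding theorem $\chi_G$ is admissible, which is precisely the asserted admissibility of the family of order types determined by the $F_i(x,y)$. For the converse I would assume each $F_i$ curl-free; by the Poincar\'e lemma (equivalently, path-independence of $\int F_i\cdot d\gamma$ on the simply connected domain $\mathbb{R}^2$) there are functions $f_i$ with $\nabla f_i=F_i$, unique up to additive constants, and I would set $G=\{f_1,\dots,f_n\}$. Then $\chi_G(x,y)$ is the order type of $\{F_1(x,y),\dots,F_n(x,y)\}$, so $\chi_G$ is admissible by hypothesis, and the theorem yields that $G$ is an arrangement of approaching pseudo-planes whose characteristic field is, by construction, $(F_1,\dots,F_n)$.

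I expect no deep obstacle: the content sits entirely in the preceding theorem, and the only point that needs care is its second hypothesis, surjectivity of all pairwise differences $f_i-f_j$. In the language of vector fields this is surjectivity of the potential $\int(F_i-F_j)\cdot d\gamma$, and it does not follow from admissibility alone (for example $f_1\equiv 0$ and $f_2=\arctan x$ give curl-free fields with admissible order types but a pair that is not approaching). So in the converse direction this condition must be carried along, either by interpreting the hypothesis that each $F_i$ is curl-free as meaning that $F_i=\nabla f_i$ for potentials $f_i$ whose pairwise differences are surjective, or, since the condition is anyway necessary, by listing it explicitly among the hypotheses. A final, purely technical, remark: as the $f_i$ are only $C^1$, the fields $F_i=\nabla f_i$ are merely continuous, so curl-freeness should be understood in the line-integral sense, and the Poincar\'e-lemma step above works unchanged in that setting.
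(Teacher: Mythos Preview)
Your approach is exactly the paper's: the corollary is derived in one line from the preceding theorem together with the classical fact that on $\mathbb{R}^2$ a vector field is curl-free iff it is a gradient field. Your observation about the missing surjectivity hypothesis is correct and in fact points to an imprecision in the corollary as stated in the paper, which silently drops this condition from the theorem.
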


Let now $G=(g_1,\ldots,g_n)$ be an arrangement of approaching pseudo-planes.
A natural question is, whether $G$ can be extended, that is, whether we can find a pseudo-plane $g_{n+1}$ such that $(g_1,\ldots,g_n,g_{n+1})$ is again an arrangement of approaching pseudo-planes.
Consider the realization of $\chi_G(x,y)$ for some $(x,y)\in\mathbb{R}^2$.
Any two points in this realization define a line.
Let $\mathcal{A}(x,y)$ be the line arrangement defined by all of these lines.
Note that even if $\chi_G(x,y)$ is the same order type for every $(x,y)\in\mathbb{R}^2$, the realization might be different and thus there might be a point $(x',y')\in\mathbb{R}^2$ such that $\mathcal{A}(x',y')$ is not isomorphic to $\mathcal{A}(x,y)$.
For an illustration of this issue, see Figure \ref{fig_admissible_cell}.
(This issue also comes up in the problem of extension of order types, e.g. in \cite{wheels}, where the authors count the number of order types with exactly one point in the interior of the convex hull.)

\begin{figure}
\centering
\includegraphics{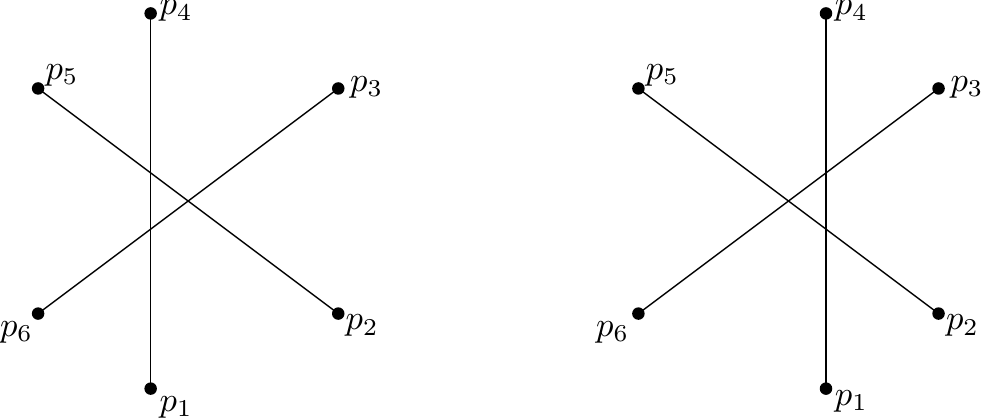}
\caption{Two different arrangements induced by the same order type.}
\label{fig_admissible_cell}
\end{figure}

We call a cell of $\mathcal{A}(x,y)$ \emph{admissible}, if its closure is not empty in $\mathcal{A}(x',y')$ for every $(x',y')\in\mathbb{R}^2$.
Clearly, if we can extend $G$ with a pseudo-plane $g_{n+1}$, then characteristic point $p$ of the normal vector $n_{n+1}(x,y)$ must lie in an admissible cell $c$.
On the other hand, as $c$ is admissible, it is possible to move $p$ continuously in $c$, and if all the vector fields $(F_1,\ldots,F_n)$ are curl-free, then so is the vector field $F_{n+1}$ obtained this way.
Thus, $F_{n+1}$ is the vector field of a differentiable function $f_{n+1}$ and by Corollary \ref{characteristic_field}, its graph $g_{n+1}$ extends $G$.
In particular, $G$ can be extended if and only if $\mathcal{A}(x,y)$ contains an admissible cell.
As the cells incident to a characteristic point are always admissible, we get that every arrangement of approaching pseudo-planes can be extended.
Furthermore, by the properties of approaching pseudo-planes, $g_{n+1}$ can be chosen to go through any given point $p$ in $\mathbb{R}^3$.
In conclusion, we get the following:

\begin{theorem}
Let $G=(g_1,\ldots,g_n)$ be an arrangement of approaching pseudo-planes and let $p$ be a point in $\mathbb{R}^3$.
Then there exists a pseudo-plane $g_{n+1}$ such that $(g_1,\ldots,g_n,g_{n+1})$ is an arrangement of approaching pseudo-planes and $p$ lies on $g_{n+1}$.
\end{theorem}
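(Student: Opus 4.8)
The plan is to combine the structural observations made in the discussion preceding the statement with Corollary~\ref{characteristic_field}. We already know that for every arrangement of approaching pseudo-planes the characteristic points of the tangent planes above a fixed $(x,y)$ form a point multiset whose order type is $\chi_G(x,y)$, and that the whole collection of gradient vector fields $(F_1,\ldots,F_n)$ is curl-free with $\{\,\text{order type of }F_1(x,y),\ldots,F_n(x,y)\,\}$ admissible at every $(x,y)$. The argument sketched just above the theorem shows that extending $G$ by a pseudo-plane amounts to choosing, for each $(x,y)$, a characteristic point $p_{n+1}(x,y)$ lying in an admissible cell of $\mathcal{A}(x,y)$, doing so continuously, and keeping the resulting vector field $F_{n+1}$ curl-free. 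Since the cells of $\mathcal{A}(x,y)$ that are incident to an existing characteristic point $p_k(x,y)$ are admissible by definition, such a choice is always available. So the only new content of the theorem is that we can moreover force the new pseudo-plane to pass through a prescribed point $p=(a,b,c)\in\mathbb{R}^3$.

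First I would fix one existing function, say $f_1$, and choose the new characteristic field to be $F_{n+1}=F_1$; equivalently, take $g_{n+1}$ to be a vertical translate of $g_1$, i.e.\ $f_{n+1}=f_1+\kappa$ for a constant $\kappa\in\mathbb{R}$. By Observation~\ref{obs:main} (its three-dimensional analogue, which is exactly the defining property of approaching pseudo-planes: we may add any constant to any function and still have a valid arrangement), the family $(g_1,\ldots,g_n,g_{n+1})$ is then an arrangement of approaching pseudo-planes for every choice of $\kappa$. Now simply set $\kappa=c-f_1(a,b)$; then $f_{n+1}(a,b)=f_1(a,b)+\kappa=c$, so the point $p$ lies on $g_{n+1}$. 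This already proves the theorem. (One should note that if one insists on $g_{n+1}$ being genuinely distinct from the existing pseudo-planes and transversal to them, one instead perturbs: pick any admissible cell $\sigma(x,y)$ of $\mathcal{A}(x,y)$ incident to $p_1(x,y)$, let $p_{n+1}(x,y)$ move continuously inside $\sigma(x,y)$ near $p_1(x,y)$, obtain a curl-free $F_{n+1}$ close to $F_1$ by the remark that a small continuous deformation of a gradient field through admissible cells stays curl-free, integrate it to $f_{n+1}$, and finally add a constant as above to route it through $p$; Corollary~\ref{characteristic_field} certifies that the result is an approaching arrangement.)

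The genuinely delicate point, and the one I would spell out carefully, is the claim implicit in the paragraph before the theorem that \emph{moving the characteristic point continuously inside admissible cells yields again a curl-free vector field whose integral is a valid pseudo-plane}. The subtlety is that ``admissible'' is a global condition — the chosen cell must remain a nonempty cell of $\mathcal{A}(x',y')$ for \emph{all} $(x',y')$ — and that the realization $\mathcal{A}(x,y)$ can change combinatorially as $(x,y)$ varies even when $\chi_G(x,y)$ does not (this is precisely the phenomenon illustrated in Figure~\ref{fig_admissible_cell}). Using the cells incident to an existing characteristic point sidesteps this entirely: such a cell is bounded only by lines through $p_k(x,y)$, and a sufficiently small ball around $p_k(x,y)$ meets only those lines, so the cell is nonempty in every realization, hence admissible; this is why the ``translate $g_1$'' construction is the clean route and why I would present it as the main argument. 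Admissibility of $\chi_{G\cup\{g_{n+1}\}}$ then follows because conditions (1)–(3) for the enlarged family reduce, for triples/pairs involving the index $n+1$, to the same conditions for the original family (the new characteristic point is a copy of, or a small deformation of, an old one, so it creates no new collinearities that persist over a set of the wrong dimension) — this is exactly the content of Corollary~\ref{characteristic_field} applied to $(F_1,\ldots,F_n,F_{n+1})$.
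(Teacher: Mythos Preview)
Your main construction is wrong. Taking $f_{n+1}=f_1+\kappa$ with $\kappa\neq 0$ gives two graphs $g_1$ and $g_{n+1}$ that never meet, so $(g_1,\ldots,g_n,g_{n+1})$ is not an arrangement of pseudo-planes at all: the definition requires any two of them to intersect like two non-parallel hyperplanes. The defining property of \emph{approaching} arrangements only says that translating the $n$ existing surfaces by constants $c_1,\ldots,c_n$ again yields a valid arrangement of the same $n$ surfaces; it does not let you duplicate a member and keep both copies. Equivalently, with $p_{n+1}(x,y)=p_1(x,y)$ on all of $\mathbb{R}^2$, condition~(3) of admissibility fails for the pair $\{1,n+1\}$, so Corollary~\ref{characteristic_field} does not apply to the enlarged family. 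Hence the sentence ``This already proves the theorem'' is false, and your later claim that conditions (1)--(3) ``reduce \ldots\ to the same conditions for the original family'' is incorrect in the copy case.

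What survives is exactly the content you put in parentheses, and that is essentially the paper's argument: one must place the new characteristic point genuinely inside an admissible cell of $\mathcal{A}(x,y)$ (the cells incident to an existing $p_k(x,y)$ being the safe choice), integrate the resulting vector field to get $f_{n+1}$, and only \emph{then} translate vertically to hit the prescribed point~$p$. So your parenthetical is the proof, not a refinement. One caution that applies to your sketch (and, to be fair, to the paper's own wording): the assertion that ``a small continuous deformation of a gradient field stays curl-free'' is not true as stated---an arbitrary continuous choice of $p_{n+1}(x,y)$ inside the admissible cell need not be a gradient. What one actually needs is to \emph{choose} a curl-free $F_{n+1}$ whose values land in the admissible cell everywhere (for instance, the gradient of an explicit perturbation such as $f_1(x,y)+\varepsilon\,\ell(x,y)$ for a suitable linear form $\ell$), and then verify that its characteristic point stays in the intended cell; this step deserves an explicit construction rather than an appeal to deformation.
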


On the other hand, it could possible that no cell but the ones incident to a characteristic point are admissible, heavily restricting the choices for $g_{n+1}$.
In this case, every pseudo-plane that extends $G$ is essentially a copy of one of the pseudo-planes of $G$.
For some order types, there are cells that are not incident to a characteristic point but still appear in every possible realization, e.g. the unique $5$-gon defined by $5$ points in convex position.
It is an interesting open problem to characterize the cells which appear in every realization of an order type.

\section{Conclusion}
In this paper, we introduced a type of pseudo-line arrangements that generalize line arrangements, but still retain certain geometric properties.
One of the main algorithmic open problems is deciding the realizability of a pseudo-line arrangement as a isomorphic approaching arrangement.
Further, we do not know how projective transformations influence this realizability.
The concept can be generalized to higher dimensions.
Apart from the properties we already mentioned in the introduction, we are not aware of further non-trivial observations.
Eventually, we hope for this concept to shed more light on the differences between pseudo-line arrangements and line arrangements.
For higher dimensions, we gave some insight into the structure of approaching hyperplane arrangements via the order type defined by their normal vectors.
It would be interesting to obtain further properties of this setting.

\bibliographystyle{abbrv}
\bibliography{bibliography}

\end{document}